\newtheorem{assump}{Assumption}{\bf}{\it}
\newtheorem{theorem}{Theorem}
\newtheorem{proposition}{Proposition}
\newtheorem{lemma}{Lemma}
\newproof{proof}{Proof}
\newcommand{\Rmnum}[1]{\expandafter\@slowromancap\romannumeral #1@}
\newcommand{\mv}{\bm}
\renewcommand{\le}{\leqslant}
\renewcommand{\ge}{\geqslant}
\DeclareMathOperator{\proba}{\mathrm{P}}
\DeclareMathOperator{\proban}{\mathbb{P}_{n}}
\newcommand{\dto}{\rightsquigarrow}
\newcommand{\RR}{\mathbb{R}}
\newcommand{\simplexp}{\Delta_{p-1}}
\newcommand{\simplextwo}{\Delta_{2}}
\newcommand{\cont}{{\cal C}}
\newcommand{\Aclass}{\mathcal{A}}
\newcommand{\Aclassm}{\mathcal{A}_{m}}
\newcommand{\inv}{^\leftarrow}
\newcommand{\defni}{\emph}
\newcommand{\1}{\boldsymbol{1}}
\newcommand{\diff}{\mathrm{d}}
\newcommand{\ecopn}{\mathbb{C}_n}
\newcommand{\An}{\hat{A}_{n} }
\newcommand{\ACFGn}{\hat{A}_{n}^{\mathrm{CFG}}}
\newcommand{\APn}{\hat{A}_{n}^{\mathrm{P}}}
\newcommand{\aAP}{\mathbb{A}^{\mathrm{P}}}
\newcommand{\aACFG}{\mathbb{A}^{\mathrm{CFG}}}
\newcommand{\aACFGn}{\mathbb{A}^{\mathrm{CFG}}_n}
\newcommand{\aAPn}{\mathbb{A}^{\mathrm{P}}_n}
\newcommand{\Aproj}{\hat{A}^{\mathrm{pr}}}
\newcommand{\Aprojm}{ \hat{A}^{\mathrm{pr}}_{m}}
\newcommand{\eps}{\varepsilon}
\renewcommand{\Pr}{\operatorname{P}}
\newcommand{\E}{\operatorname{E}}
\newcommand{\cov}{\operatorname{cov}}
\newcommand{\argmin}{\operatornamewithlimits{\arg\min}}
\newcommand{\Hc}{\mathscr{H}}
\newcommand{\Vc}{\mathscr{V}}
\newcommand{\inpr}[2]{\langle{#1},{#2}\rangle}
\newcommand{\norm}[1]{\|{#1}\|}
\newcommand{\ind}{\bm{1}}
\numberwithin{equation}{section}
\begin{document}

\begin{frontmatter}
\title{Nonparametric estimation of multivariate extreme-value copulas} 

\author[UCL]{Gordon Gudendorf\fnref{fn1}}
\ead{gordon.gudendorf@uclouvain.be}

\author[UCL]{Johan Segers\fnref{fn1}\corref{cor}} 
\ead{johan.segers@uclouvain.be}

\address[UCL]{Institut de statistique, biostatistique et sciences actuarielles, Voie du Roman Pays 20, B-1348 Louvain-la-Neuve, Belgium.}
\fntext[fn1]{Funding was provided by IAP research network grant P6/03 of the Belgian government (Belgian Science Policy) and by ``Pro\-jet d'ac\-tions de re\-cher\-che con\-cer\-t\'ees'' number 07/12-002 of the Com\-mu\-nau\-t\'e fran\-\c{c}ai\-se de Bel\-gi\-que, granted by the Aca\-d\'e\-mie uni\-ver\-si\-taire de Lou\-vain.}
\cortext[cor]{Corresponding author}

\begin{abstract}
Extreme-value copulas arise in the asymptotic theory for componentwise maxima of independent random samples. An extreme-value copula is determined by its Pickands dependence function, which is a function on the unit simplex subject to certain shape constraints that arise from an integral transform of an underlying measure called spectral measure. Multivariate extensions are provided of certain rank-based nonparametric estimators of the Pickands dependence function. The shape constraint that the estimator should itself be a Pickands dependence function is enforced by replacing an initial estimator by its best least-squares approximation in the set of Pickands dependence functions having a discrete spectral measure supported on a sufficiently fine grid. Weak convergence of the standardized estimators is demonstrated and the finite-sample performance of the estimators is investigated by means of a simulation experiment.
\end{abstract}

\begin{keyword}
empirical copula \sep extreme-value copula \sep Pickands dependence function \sep simplex \sep shape constraints \sep spectral measure \sep weak convergence

\MSC[2010] 62G05 \sep 62G32 \sep 62G20
\end{keyword}
\end{frontmatter}

\section{Introduction}
\label{s:intro}

Extreme-value copulas arise in the asymptotic theory for componentwise maxima of independent random samples. They provide the dependence structures for the class of multivariate extreme-value or max-stable distributions. More generally, they constitute a flexible class of models for describing positive association; see \citet{GdS10} for a survey.

In this paper we will focus on the nonparametric estimation of extreme-value copulas in general dimensions. In particular, we aim at multivariate extensions of the rank-based estimators in \citet{GS09} and the projection methodology in \citet{FGS08}.

Let $\mv{X}_i = (X_{i,1}, \ldots, X_{i,p})$, $i \in \{1, \ldots, n\}$, be an independent random sample from a $p$-variate, continuous distribution function $F$ with margins  $F_1, \ldots, F_p$ and copula $C$, that is,
\[
  F(\mv{x}) = C \bigl( F_1(x_1), \ldots, F_p(x_p) \bigr), \qquad \mv{x} \in \RR^p,
\]
where $F(\mv{x}) = \Pr(\mv{X} \le \mv{x})$ (componentwise inequalities), $F_j(x_j) = \Pr(X_j \le x_j)$, and $C$ is the joint distribution function of $(F_1(X_1), \ldots, F_p(X_p))$. We are interested in nonparametric estimation of $C$ in the model where the margins $F_1, \ldots, F_p$ are completely unknown (but continuous) and $C$ is known to be an extreme-value copula.

A $p$-variate copula $C$ is an \defni{extreme-value copula} if there exists a finite Borel measure $H$ on the unit simplex $\simplexp = \{ (w_{1} , \dots , w_{p}) \in [0 , 1]^p : w_1 + \cdots + w_p = 1 \}$, called \defni{spectral measure}, such that
\begin{equation}
\label{MEVC:ell}
  C( \mv{u} ) = \exp \bigl( - \ell( - \log u_1, \ldots, - \log u_{p} ) \bigr), \qquad \mv{u} \in (0, 1]^p,
\end{equation}
the \defni{tail dependence function} $\ell : [0, \infty)^p \to [0, \infty)$ being given by
\begin{equation}
\label{Ell}
  \ell(\mv{x}) = \int_{\simplexp} \max\{v_1 x_1, \ldots, v_p x_p\} \; H(\diff \mv{v} ).
\end{equation}
The spectral measure $H$ is arbitrary except for the $p$ moment constraints
\begin{equation}
  \int_{\simplexp} v_j \; H( \diff \mv{v} ) = 1, \qquad j \in \{1, \ldots, p \},
\label{H:constr}
\end{equation}
which are equivalent to requiring that the margins of $C$ be uniform on $(0, 1)$. 

The tail dependence function $\ell$ in \eqref{Ell} is convex, homogeneous of order one, that is $\ell( c \mv{x} ) = c \, \ell( \mv{x} )$ for $c > 0$, and satisfies $\max(x_1, \ldots, x_p) \le \ell(\mv{x}) \le x_1 + \cdots + x_p$ for all $\mv{x} \in [0, \infty)^p$. By homogeneity, $\ell$ is characterized by the \defni{Pickands dependence function} $A : \simplexp \to [1/p, 1]$, which is simply the restriction of $\ell$ to the unit  simplex: for $\mv{x} \in [0, \infty)^p \setminus \{ \mv{0} \}$,
\begin{align*}
  \ell(\mv{x}) &= (x_1 + \cdots + x_p ) \, A(w_1, \ldots, w_{p-1} ) \\
  \text{where } \qquad w_j &= \frac{x_j}{x_1 + \cdots + x_p }.
\end{align*}
Here and further on, we frequently identify $\simplexp$ with $\{ (w_1, \ldots, w_{p-1}) \in [0, 1]^{p-1} : w_1 + \cdots + w_{p-1} \le 1 \}$. The extreme-value copula $C$ can be expressed in terms of $A$ via
\begin{equation}
  C( \mv{u} )
  = \exp \biggl\{ \bigl( {\textstyle\sum_{j=1}^{p} \log u_{j}} \bigr) \,
    A \biggl(  \frac{\log u_{1}}{ \sum_{j=1}^{p} \log u_{j} }, \ldots ,  \frac{\log u_{p-1}}{ \sum_{j=1}^{p} \log u_{j} }  \biggr) \biggr\} 
 \label{Pick:rep}   
\end{equation}
for $\mv{u} \in (0, 1]^p \setminus \{ (1, \ldots, 1) \}$, with 
\begin{equation}
  A(\mv{w}) =  \int_{\simplexp} \max \{ v_1 w_1, \ldots, v_p w_p \} H(\diff \mv{v} ) , \qquad \mv{w} \in \simplexp,
\label{A:h} 
\end{equation}
see \citet{Pickands81} and \citet{ZWP08}. The function $A$ is convex as well and satisfies $\max(w_1, \ldots, w_p ) \le A( \mv{w} ) \le 1$ for all $\mv{w} \in \simplexp$. 

Nonparametric estimators for $A$ have initially been developed in \citet{Pickands81}, with modifications in \citet{Deheuvels91} and \citet{HT00}, and in \citet{CFG97}. These estimators will be referred to as the Pickands and CFG estimators, respectively; see Section~\ref{s:estim} for definitions. In the previously cited papers, the marginal distributions were assumed to be known. The more realistic case of unknown margins has been treated in the bivariate case in \citet{RVF01} for a submodel and in \citet{GS09} for the general model. Multivariate extensions have been proposed in \citet{ZWP08} and \citet{GdS11} for the case of known margins. In Section~\ref{s:estim}, we will provide a proof for the convergence of these estimators in case of unknown margins being estimated by the empirical distribution functions, thus generalizing the main results in \citet{GS09} to arbitrary dimensions. As in \citet{KY10} and \citet{GKNY11}, the estimators could also be used as a starting point for goodness-of-fit tests, but for brevity we do not pursue this here. Finally, a new type of nonparametric estimator has been proposed in \citet{BDV11} for the bivariate case.

In the proofs of the asymptotic normality of the Pickands and CFG estimators, a certain expansion of the empirical copula process due to \citet{Stute84} and \citet{T05} plays a crucial role. The second-order derivatives of extreme-value copulas typically exhibit explosive behaviour near the corners of the hypercube, violating the assumptions in the two papers just cited. In \citet{Segers11}, it was shown that the same expansion continues to hold under much weaker conditions on the partial derivatives. In Section~\ref{s:empcop}, these issues are considered for multivariate extreme-value copulas.

As the estimators for $A$ considered here fail to be Pickands dependence functions themselves, it is natural to ask how to enforce the shape constraints on such functions in the estimation procedure. In dimension $p = 2$, it is sufficient to ensure that $A$ is convex and takes values in the range $\max(\mv{w}) \le A(\mv{w}) \le 1$, for instance by truncation and convexification \citep{Deheuvels91}. In dimension $p \ge 3$, however, this procedure is no longer sufficient \citep[page~257]{BGTS04} and one needs to rely on the spectral representation in \eqref{A:h}. In Section~\ref{s:proj} we will apply an projection methodology \citep{FGS08} to obtain valid estimates: an initial estimate is replaced by its best least-squares approximation in the set of Pickands dependence functions corresponding to discrete spectral measures supported on a fine grid.

The results of a simulation experiment aimed at investigating the finite-sample performance of the original and projected Pickands and CFG estimators are reported in Section~\ref{s:simul}. All proofs are relegated to the Appendix. 

Throughout the article, we will apply the following notations. For a space $\mathcal{W}$, let $ \ell^{\infty}( \mathcal{W}  )$ and $\cont ( \mathcal{W} )$ denote the spaces of real-valued bounded and real-valued  continuous functions respectively, where we endow both spaces with the uniform norm $\| \cdot \|_{\infty} : f \mapsto \sup_{x \in \mathcal{W}} | f(x) | $.
Furthermore, the indicator function of the event $E$ is denoted by $\1 (E)$. The arrow `$\dto$' will stand for  weak convergence. 
For any $p-$variate real-valued function $f$ with values in $\RR$, the first and second-order partial derivatives will be denoted by $\dot{f}_{i}(\mv{x}) = \frac{\partial}{\partial x_{i}} f(x_{1}, \dots, x_{p})$ and $\ddot{f}_{ij}(\mv{x}) = \frac{\partial^2}{\partial x_{i} \partial x_{j}} f(x_{1}, \dots, x_{p})$.

\section{Empirical copula processes}
\label{s:empcop}

Let $\mv{X}_1, \mv{X}_2, \ldots$ be an iid sequence of random vectors from a $p$-variate multivariate distribution $F$ with continuous margins $F_1, \ldots, F_p$. If the margins $F_{1}, \dots , F_{p}$ are known, we can define the empirical cumulative distribution function $C_n$ of the (unobservable) random sample $\mv{U}_{i} = (U_{i,1}, \ldots, U_{i,p}) = (F_1(X_{i,1}), \ldots, F_p(X_{i,p}))$ for $i \in \{1, \ldots, n\}$ by
\begin{equation}
\label{e:Cn}
  C_{n}( \mv{u} ) = \frac{1}{n} \sum_{i=1}^{n} \ind\left( U_{i,1} \le u_{1} , \dots ,  U_{i,p} \le u_{p} \right), 
  \qquad \mv{u} \in [0, 1]^p,
\end{equation}
with associated empirical process
\begin{equation}
\label{e:empproc}
\alpha_{n} = n^{1/2} \left( C_{n} - C \right).
\end{equation}
For ease of notation, we will write
\begin{equation}
\alpha_{n,j}(u_j) = \alpha_n (1, \ldots , 1 , u_j , 1 , \ldots ,1)  \quad \text{for $j \in \{ 1, \ldots, p \}$.}
\end{equation}
In practice, the marginal distributions will need to be estimated. If we are not willing to make any assumptions (except for continuity) we can estimate them by the empirical distribution functions
\begin{equation}
\label{e:empdist}
F_{n,j}(x) = \frac{1}{n+1} \sum_{i=1}^{n} \ind(X_{i,j} \le x ), \qquad j\in \{1, \ldots ,p \},
\end{equation} 
where we divided by $n+1$ in order to avoid later problems at the borders. By so doing, we can construct $n$ vectors $\hat{\mv{U}}_i=( \hat{U}_{i,1}, \ldots , \hat{U}_{i,p} )$ via
\begin{equation}
\label{e:ranks}
\hat{U}_{i,j} = F_{n,j}(X_{i,j}) = \frac{1}{n+1} \sum_{l=1}^{n} \ind( X_{l,j} \le X_{i,j})
\end{equation}
for $i \in \{1, \ldots, n\}$ and $j \in \{1, \ldots, p\}$. The empirical copula will be denoted by
\begin{equation}
\label{e:empcop}
\hat{C}_{n}( \mv{u} )  =  \frac{1}{n} \sum_{i=1}^{n} \ind\left( \hat{U}_{i,1} \le u_{1} , \dots ,  \hat{U}_{i,p} \le u_{p} \right),
  \qquad \mv{u} \in [0, 1]^p
\end{equation}
with associated empirical copula process
\begin{equation}
\label{e:empcopproc}
\ecopn = n^{1/2} \bigl( \hat{C}_{n} - C \bigr).
\end{equation}

In \citet{Stute84} and \citet{T05} it was established that if all second-order derivatives of $C$ exist and are continuous on $[0, 1]^p$, the processes $\alpha_n$ in \eqref{e:empproc} and $\ecopn$ in \eqref{e:empcopproc} are related via
\begin{equation}
\label{e:stute}
  \ecopn(\mv{u})  =  \alpha_{n} ( \mv{u} ) - \sum_{j=1}^p \dot{C}_{j}( \mv{u} ) \, \alpha_{n,j}(u_j) + R_{n}( \mv{u} ),
\end{equation}
the remainder term $R_n$ satisfying
\begin{equation}
\label{e:stute:Rn}
  \sup_{\mv{u} \in [0,1]^{p}} \vert R_{n} (\mv{u}) \vert = O \bigl( n^{-1/4}  ( \log n)^{1/2} ( \log \log n )^{1/4} \bigr) \quad \text{almost surely}.
\end{equation}

Let $\ell^\infty([0, 1]^p)$ denote the space of bounded real-valued functions on $[0, 1]^p$, equipped with the topology of uniform convergence. Weak convergence of maps taking values in $\ell^\infty([0, 1]^p)$ will be understood as in \citet[Definition~1.3.3]{VW96} and will be denoted by `$\dto$'. By classical empirical process theory, we have $\alpha_n \dto \alpha$ as $n \to \infty$, the limiting process being a mean-zero tight Gaussian process with continuous trajectories and covariance function
\begin{equation}
\label{e:alpha}
  \cov \bigl( \alpha(\mv{u}), \alpha(\mv{v}) \bigr)
  = C( \mv{u} \wedge \mv{v} ) - C(\mv{u}) \, C(\mv{v}), \qquad \mv{u}, \mv{v} \in [0, 1]^p,
\end{equation}
where $(\mv{u} \wedge \mv{v})_j = \min(u_j, v_j)$. In view of the expansion \eqref{e:stute}, it then follows that in $\ell^\infty([0, 1]^p)$, we have $\mathbb{C}_n \dto \mathbb{C}$ as $n \to \infty$, where
\begin{equation}
\label{e:copproc}
  \mathbb{C}(u) = \alpha(\mv{u}) - \sum_{j=1}^p \dot{C}_j(\mv{u}) \, \alpha_j(u_j)
\end{equation}
and $\alpha_j(u_j) = \alpha(1, \ldots, 1, u_j, 1, \ldots, 1)$.

Like many other copulas, extreme-value copulas do in general not have uniformly bounded second-order partial derivatives. For instance, in the bivariate case, every copula having a positive coefficient of upper tail dependence will have first-order partial derivatives that fail to have a continuous extension to the upper corner $(1, 1)$; see \citet[Example~1.1]{Segers11}. As a consequence, the only bivariate extreme-value copula whose density is uniformly bounded is the independence copula. However, as shown in the same paper, for copulas satisfying Assumption~\ref{Assump:1} below, the expansion~\eqref{e:stute}--\eqref{e:stute:Rn} of the empirical copula process remains valid. 

\begin{assump}
\label{Assump:1}
\begin{itemize}
\item[(C1)] For every $j \in \{1, \ldots, p\}$, the first-order partial derivative $\dot{C}_j$ exists and is continuous on the set $V_{p,j} = \{ u \in [0, 1]^p : 0 < u_j < 1 \}$.
\item[(C2)] For every $i, j \in \{1, \ldots, p\}$ ($i$ and $j$ not necessarily distinct), $\ddot{C}_{ij}$ exists and is continuous on $V_{p,i} \cap V_{p,j}$ and
\[
  \sup_{u \in V_{p,i} \cap V_{p,j}} \max\{ u_i(1-u_i), u_j(1-u_j) \} \, | \ddot{C}_{ij}(u) | < \infty.
\]
\end{itemize}
\end{assump}

In fact, for weak convergence $\mathbb{C}_n \dto \mathbb{C}$ in $\ell^\infty([0, 1]^p)$ to hold, condition (C1) is already sufficient. In the context of multivariate extreme-value copulas, it will be of interest to have a readily verifiable condition on the stable tail dependence function $\ell$ for Assumption~\ref{Assump:1} to hold.

\begin{assump}
\label{Assump:2}
\begin{itemize}
\item[(L1)] For every $j \in \{1, \ldots, p\}$, the first-order partial derivative $\dot{\ell}_j$ exists and is continuous on the set $W_{p,j} = \{ x \in [0, \infty)^p : x_j > 0 \}$.
\item[(L2)] For every $i, j \in \{1, \ldots, p\}$ ($i$ and $j$ not necessarily distinct), $\ddot{\ell}_{ij}$ exists and is continuous on $W_{p,i} \cap W_{p,j}$ and
\[
  \sup_{\substack{x \in W_{p,i} \cap W_{p,j} \\ x_1 + \cdots + x_p = 1}} \max(x_i, x_j) \, |\ddot{\ell}_{ij}(x)| < \infty. 
\]
\end{itemize}
\end{assump}

\begin{proposition}
\label{thm:empcopproc}
For $p$-variate extreme-value copulas, (L1) implies (C1). If in addition (L2) holds, then (C2) holds as well.
\end{proposition}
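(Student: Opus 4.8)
The plan is to write $C$ in terms of $\ell$ through \eqref{MEVC:ell}, compute the partial derivatives of $C$ by the chain rule under the substitution $x_k = -\log u_k$, and then deduce (C1) and (C2) from (L1) and (L2) together with a handful of elementary estimates. Before differentiating I would record three auxiliary facts. Since $\ell$ is nondecreasing in each argument and $\ell(\mv x) \le x_1 + \cdots + x_p$ while $\int_{\simplexp} v_j \, H(\diff \mv v) = 1$ by \eqref{H:constr}, one has $0 \le \dot\ell_j \le 1$ wherever the derivative exists. From $\max_k x_k \le \ell(\mv x)$ and \eqref{MEVC:ell} one gets $C(\mv u) \le \min_k u_k$, so that $C(\mv u)/(u_i u_j) \le 1/\max(u_i,u_j)$ and $C(\mv u)/u_j^2 \le 1/u_j$. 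Finally, because $\ell$ is homogeneous of degree one, $\ddot\ell_{ij}$ is homogeneous of degree $-1$, whence the normalized bound in (L2) extends to $|\ddot\ell_{ij}(\mv x)| \le M/\max(x_i,x_j)$ for every $\mv x \in W_{p,i} \cap W_{p,j}$, where $M$ is the constant in (L2).

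On the open region where all coordinates lie in $(0,1)$, the chain rule gives, with $\mv x = (-\log u_1, \ldots, -\log u_p)$,
\[
  \dot C_j(\mv u) = \frac{C(\mv u)}{u_j}\,\dot\ell_j(\mv x),
\]
and, differentiating once more,
\[
  \ddot C_{ij}(\mv u) = \frac{C(\mv u)}{u_i u_j}\bigl(\dot\ell_i \dot\ell_j - \ddot\ell_{ij}\bigr)(\mv x) \ \ (i \ne j),
  \qquad
  \ddot C_{jj}(\mv u) = \frac{C(\mv u)}{u_j^2}\bigl(\dot\ell_j^2 - \dot\ell_j - \ddot\ell_{jj}\bigr)(\mv x).
\]
For (L1)$\Rightarrow$(C1), on the region $0 < u_k < 1$ for all $k$ the first formula, (L1), and continuity of $C$ show that $\dot C_j$ exists and is continuous; at a boundary point of $V_{p,j}$ where $u_k = 0$ for some $k \ne j$, the copula vanishes identically in $u_j$, so $\dot C_j = 0$ there, and continuity follows from $|\dot C_j| \le C/u_j \to 0$ as $u_k \to 0$ (using $\dot\ell_j \le 1$ and that $u_j$ stays bounded away from $0$). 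The existence and continuity of $\ddot C_{ij}$ required by (C2) are obtained the same way from the second formula, (L1), (L2), and $C \to 0$ on the faces $\{u_k = 0\}$, $k \ne i,j$.

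The only genuinely non-routine point is the uniform bound in (C2). Writing $a = u_i$, $b = u_j$ and using $|\dot\ell_i\dot\ell_j - \ddot\ell_{ij}| \le \dot\ell_i\dot\ell_j + |\ddot\ell_{ij}| \le 1 + M/\max(x_i,x_j)$ together with $C/(u_i u_j) \le 1/\max(a,b)$, the off-diagonal quantity $\max\{a(1-a), b(1-b)\}\,|\ddot C_{ij}|$ splits into a part bounded by $\max\{a(1-a),b(1-b)\}/\max(a,b) \le 1$ (since $t(1-t) \le t$) and a part equal to $M$ times
\[
  \frac{\max\{a(1-a), b(1-b)\}}{\max(a,b)\,\max(x_i,x_j)}.
\]
The key elementary inequality is $\max\{a(1-a), b(1-b)\} \le \max(a,b)\,(-\log\min(a,b)) = \max(a,b)\,\max(x_i,x_j)$, which I would prove by a short case analysis from $-\log t \ge 1-t$; it shows the displayed ratio is at most $1$, so the off-diagonal term is at most $1 + M$. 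The diagonal case is analogous, using $\dot\ell_j(1-\dot\ell_j) \le 1/4$, $|\ddot\ell_{jj}| \le M/x_j$, $C/u_j^2 \le 1/u_j$, and $(1-u_j)/(-\log u_j) \le 1$.

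The hard part is exactly this last bound. A naive estimate blows up both near the corner $(1,\ldots,1)$, where $x_i, x_j \to 0$, and along the directions where $u_i, u_j \to 0$; the inequality $\max\{a(1-a),b(1-b)\} \le \max(a,b)\,\max(x_i,x_j)$ is precisely what reconciles the two regimes, by trading the factor $1-a$ against $-\log a$ near $1$ and the factor $a$ against $\max(a,b)$ near $0$. Everything else reduces to the homogeneity of $\ell$ and the two pointwise bounds $0 \le \dot\ell_j \le 1$ and $C \le \min_k u_k$.
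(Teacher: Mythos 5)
Your proposal is correct and follows essentially the same route as the paper: the chain-rule formulas for $\dot C_j$ and $\ddot C_{ij}$ in terms of $\ell$, the extension of the bound in (L2) to all of $W_{p,i}\cap W_{p,j}$ via homogeneity of degree $-1$ of $\ddot\ell_{ij}$, the continuity at boundary points from $0\le\dot\ell_j\le 1$ and $C\le\min_k u_k$, and the final estimate via $-\log t\ge 1-t$. The only cosmetic difference is in the last step, where you isolate the inequality $\max\{a(1-a),b(1-b)\}\le\max(a,b)\max(x_i,x_j)$ while the paper chains together $\min(1/u_i,1/u_j)\,\min\bigl(1/(1-u_i),1/(1-u_j)\bigr)\le\min\bigl(1/(u_i(1-u_i)),1/(u_j(1-u_j))\bigr)$; both yield the same constant $1+K$.
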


In the bivariate case, sufficient conditions for (C1) and (C2) can be given in terms of the Pickands dependence function $A(w) = \ell(1-w, w)$, where $w \in [0, 1]$: (C1) holds as soon as $A$ is continuously differentiable on $(0, 1)$, and (C1)--(C2) hold as soon as $A$ is twice continuously differentiable on $(0, 1)$ and $\sup_{0 < w < 1} w(1-w) \, A''(w) < \infty$ \citep[Example~5.3]{Segers11}.

For completeness, we want to mention that in the above references, the empirical copula is not defined as in \eqref{e:empcop} but rather as
\[
  \hat{C}^{D}_{n} ( \mv{u} ) = F_{n} \bigl( F_{n,1}\inv (u_{1}), \dots , F_{n,p}\inv (u_{p}) \bigr),
\] 
with $F_{n,j}\inv(u_j) = \inf \{ x_j \in \RR : F_{n,j}(x_j) \ge u_j \}$. Straightforward calculus shows that, in the absence of ties,
\begin{equation*}
\sup_{\mv{u} \in [0,1]^{p}} \vert \hat{C}^{D}_{n}(\mv{u}) - \hat{C}_{n}(\mv{u})      \vert \le \frac{2p}{n},
 \end{equation*}
As a consequence, Stute's expansion \eqref{e:stute} is valid for $\hat{C}_n$ if and only if it is valid for $\hat{C}_n^D$.

\section{Nonparametric estimation of the dependence function}
\label{s:estim}

Among the most popular nonparametric estimators for $A$ figure the Pickands estimator $\APn$ \citep{Pickands81} and the estimator $\ACFGn$ proposed by \citet{CFG97}, referred to as the CFG estimator from now on. Writing
\[
  \hat{\xi}_{i} (\mv{w})  =  \bigwedge_{j=1}^{p} \frac{- \log \hat{U}_{i,j}}{w_{j}} .
\]
for $\mv{w} \in \simplexp$, with $\hat{\mv{U}}_{i,j}$ as in \eqref{e:ranks}, these estimators are defined as
\[
\frac{1}{\APn( \mv{w} )}  =  \frac{1}{n} \sum_{i=1}^{n} \hat{ \xi}_i (\mv{w}) \qquad \text{and } \quad 
\log \ACFGn( \mv{w} )  =  - \frac{1}{n} \sum_{i=1}^{n} \log \hat{\xi}_i (\mv{w}) - \gamma ,
\]
with $\gamma =  0.5772\ldots$ the Euler--Mascheroni constant. Explanations on the construction of these estimators are provided for instance in the original references given before, in \citet{GS09} and in the survey paper \citet{GdS10}. The multivariate extension of the CFG estimator was presented in \citet{ZWP08}, albeit under a different but equivalent form.   

In order to improve the small-sample properties of the above estimators, the endpoint constraints $A(\mv{e}_j) = 1$ for $j \in \{1, \ldots, p\}$, with $\mv{e}_j$ the $j$th standard unit vector in $\RR^p$, can be imposed as follows. Given continuous functions $\lambda_1, \ldots, \lambda_p : \simplexp \to \RR$ verifying $\lambda_j(\mv{e}_k) = \delta_{jk}$ (Kronecker delta) for all $j,k \in \{1, \ldots, p\}$, define
\begin{align}
\label{eq:pick:corr}
 \frac{1}{\hat{A}^{\mathrm{P}}_{\mv{\lambda},n }( \mv{w} )} & =  \frac{1}{\hat{A}^{\mathrm{P}}_{n }( \mv{w} )} - \sum_{j=1}^p \lambda_{j}(\mv{w})
 \, \left( \frac{1}{\hat{A}^{\mathrm{P}}_{n}( \mv{e}_{j} )} - 1 \right), \\
\label{eq:cfg:corr}
  \log \hat{A}^{\mathrm{CFG}}_{\mv{\lambda} , n }(\mv{w}) & =  \log \hat{A}^{\mathrm{CFG}}_{n}(\mv{w}) - \sum_{j=1}^p \lambda_j(\mv{w}) \, \log \hat{A}^{\mathrm{CFG}}_{ n}(\mv{e}_{j}).
\end{align}
In case of known margins, variance-minimizing weight functions $\lambda_j$ can be determined adaptively by ordinary least squares \citep{S07Ahs, GdS11}. However, if the marginal distributions are unknown, these endpoint corrections are asymptotically irrelevant \citep{GS09}, since
\begin{align*}
  \frac{1}{\APn ( \mv{e}_{j} )} - 1 
  &= \frac{1}{n} \sum_{i=1}^{n} \log \left( \frac{n+1}{i} \right) - 1 = O\left( \frac{\log n}{n} \right) , \\
  \log \ACFGn( \mv{e}_j )
  &=  - \frac{1}{n}  \sum_{i=1}^{n} \log \left( \log \left(  \frac{n+1}{i} \right)  \right) + \int_{0}^{1} \log \left( \log \left( \frac{1}{x}  \right) \right) \diff x \\
  &= O \left( \frac{(\log n)^2}{n}  \right) .
\end{align*}
as $n \to \infty$. Nevertheless, in finite samples, the simple choice $\lambda_j(\mv{w}) = w_j$ can make quite a difference. Similarly, for unknown margins, the multivariate extension of the estimator in \citet{HT00} simplifies to $\hat{A}^{\mathrm{HT}}_n(\mv{w}) = \APn(\mv{w}) / \APn(\mv{e}_j) = \APn(\mv{w}) \{ 1 + O(n^{-1} \log n) \}$.

The next lemma establishes a functional relationship between $\APn$ and $\ACFGn$ on the one hand and the empirical copula $\hat{C}_n$ on the other hand. Recall the empirical copula process $\ecopn$ in \eqref{e:empcopproc}.

\begin{lemma}
\label{Acop:dec}
For $\mv{w} \in \simplexp$, we have
\begin{align}
 n^{1/2} \left( \frac{1}{\APn(\mv{w})} - \frac{1}{A( \mv{w} )} \right) &=   \int_{0}^{1}  \ecopn (u^{w_{1}}, \ldots , u^{w_p})  \; \frac{\diff u}{u} \label{AP:dec}, \\
n^{1/2} \bigl( \log \ACFGn ( \mv{w} ) - \log A( \mv{w} ) \bigr) &= \int_{0}^{1 }  \ecopn (u^{w_{1}}, \ldots , u^{w_p})    \frac{\diff u}{ u \log u}. \label{ACFG:dec}
\end{align}
\end{lemma}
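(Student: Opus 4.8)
The plan is to reduce both displays to deterministic integral representations of $1/\APn(\mv{w})$ and $\log\ACFGn(\mv{w})$ in terms of $\hat{C}_n$, obtained by recognising the quantities $\hat{\xi}_i(\mv{w})$ as level sets of the empirical copula along the curve $u \mapsto (u^{w_1}, \ldots, u^{w_p})$, and then subtracting the matching population identity.

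\emph{Geometry of the level curve.} For $u \in (0,1)$ and $\mv{w} \in \simplexp$, taking logarithms shows that the event $\{\hat{U}_{i,1} \le u^{w_1}, \ldots, \hat{U}_{i,p} \le u^{w_p}\}$ coincides with $\{\hat{\xi}_i(\mv{w}) \ge -\log u\}$, since $\hat{U}_{i,j} \le u^{w_j} \Leftrightarrow (-\log \hat{U}_{i,j})/w_j \ge -\log u$. Hence $\hat{C}_n(u^{w_1}, \ldots, u^{w_p})$ is the empirical survival function of $\hat{\xi}_1(\mv{w}), \ldots, \hat{\xi}_n(\mv{w})$ at $-\log u$, while \eqref{Pick:rep} gives $C(u^{w_1}, \ldots, u^{w_p}) = u^{A(\mv{w})}$, the survival function at $-\log u$ of a variable $\xi(\mv{w}) \sim \mathrm{Exp}(A(\mv{w}))$. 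Because we divide by $n+1$ in \eqref{e:ranks}, each $\hat{U}_{i,j} \in (0,1)$, so every $\hat{\xi}_i(\mv{w})$ lies in $(0,\infty)$ and its logarithm is well defined.

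\emph{Pickands identity.} From the layer-cake identity $x = \int_0^\infty \ind(t < x)\,\diff t$ for $x \ge 0$, applied to each $\hat{\xi}_i(\mv{w})$, averaged, and swapped with the integral by Tonelli, one gets $1/\APn(\mv{w}) = \int_0^\infty \hat{C}_n(e^{-tw_1}, \ldots, e^{-tw_p})\,\diff t$; the substitution $u = e^{-t}$ turns this into $\int_0^1 \hat{C}_n(u^{w_1}, \ldots, u^{w_p})\,\diff u/u$. The same computation with $C$ in place of $\hat{C}_n$, equivalently $1/A(\mv{w}) = \E[\xi(\mv{w})]$, yields $1/A(\mv{w}) = \int_0^1 C(u^{w_1}, \ldots, u^{w_p})\,\diff u/u$. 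Subtracting and multiplying by $n^{1/2}$ produces \eqref{AP:dec}.

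\emph{CFG identity (the delicate part).} The analogue of the layer-cake formula is the representation $\log x = \int_0^\infty \bigl(\ind(t < x) - \ind(t < 1)\bigr)\,\diff t / t$, valid for $x > 0$, where subtracting $\ind(t<1)$ is what renders the integrand integrable at $t = 0$. Applying it to each $\hat{\xi}_i(\mv{w})$ and averaging gives $\frac{1}{n}\sum_i \log \hat{\xi}_i(\mv{w}) = \int_0^\infty \bigl(G_n(t) - \ind(t<1)\bigr)\,\diff t/t$ with $G_n(t) = \hat{C}_n(e^{-tw_1}, \ldots, e^{-tw_p})$. The matching population identity, obtained by taking expectations and using that $\xi(\mv{w}) \sim \mathrm{Exp}(A(\mv{w}))$ satisfies $\E[\log \xi(\mv{w})] = -\log A(\mv{w}) - \gamma$, reads $-\log A(\mv{w}) - \gamma = \int_0^\infty \bigl(\bar{G}(t) - \ind(t<1)\bigr)\,\diff t/t$ with $\bar{G}(t) = e^{-tA(\mv{w})}$. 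Subtracting the two and recalling $\log \ACFGn(\mv{w}) = -\frac{1}{n}\sum_i \log \hat{\xi}_i(\mv{w}) - \gamma$, both the constant $-\gamma$ and the singular term $\ind(t<1)/t$ cancel, leaving $\log \ACFGn(\mv{w}) - \log A(\mv{w}) = -\int_0^\infty \bigl(G_n(t) - \bar{G}(t)\bigr)\,\diff t/t$. The substitution $u = e^{-t}$ (so that $1/t = -1/\log u$ and $\diff t = -\diff u/u$) converts this into $\int_0^1 (\hat{C}_n - C)(u^{w_1}, \ldots, u^{w_p})\,\diff u/(u\log u)$, and multiplying by $n^{1/2}$ gives \eqref{ACFG:dec}.

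The main obstacle is the CFG case: the individual integrals $\int_0^1 \hat{C}_n\,\diff u/(u\log u)$ and $\int_0^1 C\,\diff u/(u\log u)$ both diverge near $u=1$, so the stated identity cannot be split additively. The remedy is to carry the regulariser $\ind(t<1)/t$ through both the empirical and the population computations and to verify it cancels in the difference; equivalently, $t \mapsto (G_n(t) - \bar{G}(t))/t$ is integrable on $(0,\infty)$, since $G_n - \bar{G} = O(t)$ near $0$ (both survival functions equal $1$ at $t=0$) and both decay exponentially at $\infty$. The interchange of summation and integration is routine: Tonelli for the nonnegative Pickands integrand, and a pathwise application for the CFG integrand, all $\hat{\xi}_i(\mv{w})$ being finite and positive.
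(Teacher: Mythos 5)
Your proof is correct and follows essentially the same route as the paper, which refers for this lemma to Lemma~3.1 of \citet{GS09}: there too the identity $\hat{C}_n(u^{w_1},\ldots,u^{w_p}) = n^{-1}\sum_i \ind(\hat\xi_i(\mv{w}) \ge -\log u)$ is combined with the layer-cake representation for the Pickands case and the regularised representation of $\log x$ (equivalently, a Frullani-type cancellation of the $\gamma$ and $\ind(t<1)/t$ terms) for the CFG case. Your explicit attention to the non-integrability of the separate CFG integrals near $u=1$ is exactly the delicate point the cited proof also handles.
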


The proof is not different from the one in dimension two and can be found in \citet[Lemma~3.1]{GS09}. Equations~\eqref{AP:dec} and \eqref{ACFG:dec} are instrumental for proving the weak convergence of the processes
\[
  \aAPn = n^{1/2} ( \APn - A )    \quad   \text{and} \quad   \aACFGn = n^{1/2} ( \ACFGn - A ).
\]

\begin{theorem}
\label{thm:main}
Under Assumption \ref{Assump:1} above, $\aAPn \dto \aAP$ and $\aACFGn \dto \aACFG$ with
\begin{eqnarray*}
  \aAP( \mv{w} )   & = &- A^{2}( \mv{w} ) \int_{0}^{1} \mathbb{C}( u^{w_{1}}, \dots , u^{w_{p}} ) \;  \frac{\diff u}{u} \\
  \aACFG( \mv{w} ) & = &   A( \mv{w} ) \int_{0}^{1} \mathbb{C}( u^{w_{1}}, \dots , u^{w_{p}} ) \frac{\diff u}{u \; \log u} ,
\end{eqnarray*}
as $n \to \infty$ in the space $\cont( \simplexp )$ equipped with the topology of uniform convergence.
\end{theorem}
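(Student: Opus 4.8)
The plan is to deduce the theorem from Lemma~\ref{Acop:dec}, from the weak convergence $\ecopn \dto \ecop$ in $\ell^\infty([0,1]^p)$ recorded above, and from the functional delta method. For $f \in \ell^\infty([0,1]^p)$ and $\mv w \in \simplexp$, introduce the linear integral operators
\[
  \Phi^{\mathrm P}(f)(\mv w) = \int_0^1 f(u^{w_1},\dots,u^{w_p})\,\frac{\diff u}{u},
  \qquad
  \Phi^{\mathrm C}(f)(\mv w) = \int_0^1 f(u^{w_1},\dots,u^{w_p})\,\frac{\diff u}{u\log u},
\]
so that Lemma~\ref{Acop:dec} reads $n^{1/2}(1/\APn - 1/A) = \Phi^{\mathrm P}(\ecopn)$ and $n^{1/2}(\log\ACFGn - \log A) = \Phi^{\mathrm C}(\ecopn)$. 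The heart of the matter is to establish $\Phi^{\mathrm P}(\ecopn) \dto \Phi^{\mathrm P}(\ecop)$ and $\Phi^{\mathrm C}(\ecopn) \dto \Phi^{\mathrm C}(\ecop)$ in $\cont(\simplexp)$; the stated form of the limits then follows by a smooth transformation of these two processes.

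Neither operator is continuous on all of $\ell^\infty([0,1]^p)$, since the weights $1/u$ and $1/(u\log u)$ fail to be integrable at $u = 0$ (and $1/(u\log u)$ also at $u = 1$). I would therefore truncate. For $\delta \in (0,1/2)$ let $\Phi^{\mathrm P}_\delta$ and $\Phi^{\mathrm C}_\delta$ denote the same integrals restricted to $u \in [\delta,1]$ and to $u \in [\delta,1-\delta]$ respectively. After the substitution $u = e^{-t}$, the evaluation map $(\mv w, t) \mapsto (e^{-t w_1},\dots,e^{-t w_p})$ is jointly continuous and the truncated weights are bounded on the truncated range, so $\Phi^{\mathrm P}_\delta$ and $\Phi^{\mathrm C}_\delta$ are bounded linear maps from $\ell^\infty([0,1]^p)$ into $\cont(\simplexp)$. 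The continuous mapping theorem then gives $\Phi_\delta(\ecopn) \dto \Phi_\delta(\ecop)$ for each fixed $\delta$, for both operators.

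To let $\delta \downarrow 0$ I would invoke the standard approximation argument for weak convergence \citep{VW96}, which requires two things: (i) $\sup_{\mv w} |\Phi(\ecop)(\mv w) - \Phi_\delta(\ecop)(\mv w)| \to 0$ almost surely as $\delta \downarrow 0$; and (ii) $\lim_{\delta \downarrow 0} \limsup_n \Pr\bigl( \sup_{\mv w} |\Phi(\ecopn)(\mv w) - \Phi_\delta(\ecopn)(\mv w)| > \eps \bigr) = 0$ for every $\eps > 0$. Requirement (i) holds because $\ecop$ is a continuous Gaussian process whose variance tends to zero at the corners at the rate $C(u^{w_1},\dots,u^{w_p}) = u^{A(\mv w)}$ as $u \downarrow 0$, with $A(\mv w) \ge 1/p$ bounded away from zero, making $u \mapsto \ecop(u^{w_1},\dots,u^{w_p})$ almost surely integrable against both weights. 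Requirement (ii) is the decisive obstacle. Its difficulty is uniformity over $\mv w \in \simplexp$ as $\mv w$ approaches a vertex or a lower-dimensional face, where the exponent $A(\mv w)$ and the individual directions $w_j$ degenerate and a crude bound by $\| \ecopn \|_\infty$ is useless. The key is the exact representation, with $T = -\log\delta$,
\[
  \int_0^\delta \ecopn(u^{w_1},\dots,u^{w_p})\,\frac{\diff u}{u}
  = n^{1/2}\Bigl\{ \frac{1}{n}\sum_{i=1}^{n} \bigl(\hat\xi_i(\mv w) - T\bigr)^+ - \frac{e^{-T A(\mv w)}}{A(\mv w)} \Bigr\},
\]
with $\hat\xi_i(\mv w)$ as in Section~\ref{s:estim}; the summands vanish unless all $\hat U_{i,j}$ are small, i.e.\ unless the $i$th observation lies in the lower corner. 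Inserting Stute's expansion \eqref{e:stute}--\eqref{e:stute:Rn}, which under Assumption~\ref{Assump:1} reduces this tail to ordinary empirical-process terms in $\alpha_n$ and the marginal processes $\alpha_{n,j}$ plus a remainder, and combining the variance bounds $\var(\alpha_n(\mv u)) \le C(\mv u)$ and $\var(\alpha_{n,j}(u_j)) \le u_j$, the decay $C(u^{w_1},\dots,u^{w_p}) = u^{A(\mv w)}$, and a maximal inequality for the class indexed by $\mv w$ whose envelope shrinks as $T \to \infty$, one obtains that the tail is uniformly small in probability as $\delta \downarrow 0$, uniformly in $n$. Condition (C2) enters precisely here, through the remainder bound in Stute's expansion and the control of the derivative terms $\dot C_j(u^{w_1},\dots,u^{w_p})\,\alpha_{n,j}(u^{w_j})$ near the boundary; the $\mathrm{CFG}$ case is analogous, with the extra upper-tail piece near $u = 1$ controlled by the linear vanishing of $\ecopn$ there. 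Combining (i) and (ii) yields $\Phi^{\mathrm P}(\ecopn) \dto \Phi^{\mathrm P}(\ecop)$ and $\Phi^{\mathrm C}(\ecopn) \dto \Phi^{\mathrm C}(\ecop)$.

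Finally I would pass back from $1/\APn$ and $\log\ACFGn$ to $\APn$ and $\ACFGn$. Since $\Phi^{\mathrm P}(\ecopn) = O_{\mathrm P}(1)$ uniformly and $A \ge 1/p$, both $\APn \to A$ and $\ACFGn \to A$ uniformly in probability. Writing $\APn - A = -\APn A\,(1/\APn - 1/A)$, whose prefactor converges uniformly in probability to $-A^2$, and $\ACFGn - A = e^{\zeta_n}(\log\ACFGn - \log A)$ by the mean value theorem, with $e^{\zeta_n} \to A$ uniformly, an application of Slutsky's lemma in $\cont(\simplexp)$ (equivalently, the functional delta method with the Hadamard-differentiable maps $g \mapsto 1/g$ and $g \mapsto e^{g}$) gives
\[
  \aAPn = n^{1/2}(\APn - A) \dto -A^2\,\Phi^{\mathrm P}(\ecop) = \aAP,
  \qquad
  \aACFGn \dto A\,\Phi^{\mathrm C}(\ecop) = \aACFG.
\]
Because $A$ is continuous and both limits have continuous trajectories, the convergence takes place in $\cont(\simplexp)$, as claimed. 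The only genuinely delicate step is the uniform tail control in (ii); everything else is continuous mapping and a routine delta method.
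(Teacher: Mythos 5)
Your architecture (truncate the integral, apply the continuous mapping theorem for fixed $\delta$, then let $\delta \downarrow 0$ via the standard approximation lemma) is a legitimate route, and your final delta-method step matches the paper's. But your condition (ii) --- the uniform-in-$n$, uniform-in-$\mv{w}$ negligibility of the tails --- is not a routine verification, and the tools you name do not suffice to establish it. The obstruction is that the weights $1/u$ and $1/(u\log u)$ are \emph{not integrable} at the endpoints, so no bound on the tail integrand that is merely uniform over the class (a variance bound at a single point, or a maximal inequality with one shrinking envelope for all $s \ge T$) can be integrated against $\diff u/u$ over $(0,\delta)$: you need a bound on $|\alpha_n(u^{w_1},\dots,u^{w_p})|$ that \emph{decays in $u$} at a rate integrable against the weight, uniformly over $\mv{w}$ and $n$. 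This is precisely what a single maximal inequality does not give, and it is where the degeneracy near the vertices of $\simplexp$ (where $\min_j w_j \to 0$ and the coordinates $u^{w_j}$ approach $1$ at wildly different rates) bites. The same issue defeats your treatment of the Stute remainder: $\sup_{\mv{u}}|R_n(\mv{u})| \cdot \int_0^\delta \diff u/u = \infty$, so the remainder cannot be disposed of by its uniform bound alone on a $\delta$-tail of fixed length.

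The paper supplies exactly the two missing devices. First, it substitutes $u = e^{-s}$ and uses \emph{$n$-dependent} cutoffs $l_n = 1/(n+1)$ and $k_n = p\log(n+1)$: on $s \le l_n$ the empirical copula is identically $1$ and on $s \ge k_n$ it is identically $0$ (because the pseudo-observations live on the grid $\{1/(n+1),\dots,n/(n+1)\}$), so those ranges are handled deterministically, and the middle range has only logarithmic length, which is what makes $\|R_n\|_\infty \int_{l_n}^{k_n} h(s)\,\diff s \to 0$. Second --- and this is the key lemma you are missing --- it proves a \emph{weighted} Donsker theorem (Lemma~\ref{lemmaGn}): the process $\alpha_n(\mv{u})/q_\theta(\min(\mv{u}))$ with $q_\theta(t) = t^\theta(1-t)^\theta$ converges weakly in $\ell^\infty([0,1]^p)$. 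This converts each integral into $\int \mathbb{G}_{n,\theta}(\cdot)\,K_j(s,\mv{w})\,h(s)\,\diff s$ with deterministic kernels satisfying $0 \le K_j(s,\mv{w}) \le K(s)$ for an integrable majorant $K$ (here (C2), via $q_\theta(e^{-sw_j})\dot{C}_j \le e^{-\theta s/p}$, does its work), after which the extended continuous mapping theorem applies in one step and no $\delta \downarrow 0$ argument is needed at all. Your exact representation of the lower tail as $n^{1/2}\{n^{-1}\sum_i(\hat{\xi}_i(\mv{w})-T)^+ - e^{-TA(\mv{w})}/A(\mv{w})\}$ is correct and could be the starting point of an alternative proof, but turning ``a maximal inequality with shrinking envelope'' into the required uniform integrable decay would in effect force you to reprove the weighted empirical process lemma; as written, the decisive step is asserted rather than proved.
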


The main idea of the proof consists in substituting $\ecopn$ in \eqref{AP:dec} and \eqref{ACFG:dec} by Stute's expansion and to conclude using a refined version of the continuous mapping theorem. As the proof follows the same lines as the one  in \citet{GS09},  we will just point out the main adjustments.

\section{Projection estimator}
\label{s:proj}

The estimators of the Pickands dependence functions considered so far are in general not valid Pickands dependence functions themselves. In this section, we adapt the methodology in \citet{FGS08} to project a pilot estimate $\An$ onto the set $\Aclass$ of Pickands dependence functions of $p$-variate extreme-value copulas. To this end, we view $\Aclass$ as a closed and convex subset of the real Hilbert space $L^2(\simplexp)$ with $\simplexp$ equipped with $(p-1)$-dimensional Lebesgue measure when viewed as a subset of $\RR^{p-1}$. The inner product and the norm on $L^2(\simplexp)$ are denoted by $\inpr{f}{g} = \int fg$ and $\norm{f}_2 = (\inpr{f}{f})^{1/2}$ respectively. 

The orthogonal projection of an initial estimator $\An$ for $A$, for example the Pickands or the CFG estimator, onto $\Aclass$ is then defined as
\[ 
   \Aproj = \Pi( \An | \Aclass ) = \argmin_{  A \in \Aclass }   \| A - \An \|_{2}.
\]
Projections being contractions, it follows that $\norm{ \Aproj - A }_2 \le \norm{ \An - A }_2$ for all $A \in \Aclass$: the $L^2$-risk of the projection estimator is bounded by the one of the initial estimator.




For practical computations, we are obliged to refer to finite-dimensional subclasses $\Aclassm \subset \Aclass$, yielding the approximate projection estimator 
\begin{equation}
\label{eq:Aprojm}
   \Aprojm = \Pi ( \An | \Aclassm )  = \argmin_{  A \in \Aclassm }   \| A - \An \|_{2}.
\end{equation}
For each positive integer $m$, the class $\Aclassm$ will be defined as the set of Pickands dependence functions characterized by discrete spectral measures $H$ with fixed and finite support depending on $m$. 

Specifically, let $\Vc_{p,m}$ be the (finite) set of points $\mv{v} = (v_1, \ldots, v_p) \in \Delta_{p-1}$ such that $k_j = m v_j$ is integer for every $j \in \{1, \ldots, p\}$, so that in fact $\mv{v} = (k_1/m, \ldots, k_p/m)$ where $k_j \in \{0, \ldots, m\}$ and $k_1 + \cdots + k_p = m$. The cardinality of $\Vc_{p,m}$ is of the order $O(m^{p-1})$ as $m \to \infty$. Let $\Hc_p$ be the set of spectral measures on $\Delta_{p-1}$ and let $\Hc_{p,m}$ be the set of (discrete) spectral measures $H \in \Hc_p$ supported on $\Vc_{p,m}$, that is, $H = \sum_{ \mv{v} \in \Vc_{p,m}} h_{\mv{v}} \, \delta_{\mv{v}}$, with $\delta_{\mv{v}}$ the Dirac measure at $\mv{v}$ and where $h_{\mv{v}} = H(\{ \mv{v} \})$ is the spectral mass of the atom $\mv{v}$. The vector $\mv{h} = (h_{\mv{v}})_{\mv{v} \in \Vc_{p,m}}$ satisfies the constraints
\begin{equation}
\label{constraints}
  \left\{
    \begin{array}{l@{\quad}l}
      h_{\mv{v}}  \ge  0, & \forall \mv{v} \in \Vc_{p,m} , \\[1ex]
      \sum_{\mv{v} \in \Vc_{p,m}}  h_{\mv{v}} \, v_j = 1, & \forall j \in \{ 1, \dots , p \},
    \end{array}
  \right.
\end{equation}
the second constraint stemming from~\eqref{H:constr}.

The Pickands dependence function $A$ of a spectral measure $H$ in $\Hc_{p,m}$ can be written as
\begin{equation}
\label{A:m}
  A_{\mv{h}}(\mv{w}) = \sum_{\mv{v} \in \Vc_{p,m}} h_{\mv{v}} \, \max \{w_1 v_1, \ldots, w_p v_p\}, \qquad \mv{w} \in \simplexp.
\end{equation}
Being a linear combination of piecewise linear functions, the function $A$ in \eqref{A:m} is itself piecewise linear. All Pickands dependence function of the form \eqref{A:m} will be collected in the class $\Aclassm$. The next result can be seen as the equivalent of Lemma~2 in \citet{FGS08} stating the denseness of the piecewise linear Pickands dependence functions.

\begin{lemma}
For every $H \in \Hc_p$ and every positive integer $m$, there exists $H_m \in \Hc_{p,m}$ such that the Pickands dependence functions $A$ and $A_m$ of $H$ and $H_m$ respectively satisfy
\begin{equation}
\label{eq:approx}
  \sup_{\mv{w} \in \simplexp} | A_m(\mv{w}) - A(\mv{w}) | \le \frac{p^2}{m}.
\end{equation}
\label{lemma:dens}
\end{lemma}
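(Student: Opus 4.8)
The plan is to build the approximating measure $H_m$ directly, by redistributing the mass of $H$ onto the grid $\Vc_{p,m}$ in a way that preserves the first moments \emph{exactly}; the bound \eqref{eq:approx} then follows from the Lipschitz continuity of the integrand $\mv v \mapsto \max\{w_1 v_1,\ldots,w_p v_p\}$ appearing in \eqref{A:h}. Concretely, for each $\mv v \in \simplexp$ I would choose nonnegative weights $\phi_{\mv g}(\mv v)$, $\mv g \in \Vc_{p,m}$, with $\sum_{\mv g}\phi_{\mv g}(\mv v)=1$, supported on grid points $\mv g$ with $\|\mv g-\mv v\|_\infty \le 1/m$, and, crucially, with barycenter $\sum_{\mv g}\phi_{\mv g}(\mv v)\,\mv g = \mv v$. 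Setting $h_{\mv g} = \int_{\simplexp}\phi_{\mv g}(\mv v)\,H(\diff\mv v)$ and $H_m = \sum_{\mv g \in \Vc_{p,m}} h_{\mv g}\,\delta_{\mv g}$ produces a candidate whose Pickands function is exactly of the form \eqref{A:m}.

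First I would check that $H_m \in \Hc_{p,m}$. Nonnegativity of the $h_{\mv g}$ is immediate, and the constraints \eqref{constraints} hold because the barycenter is preserved pointwise:
\[
  \sum_{\mv g \in \Vc_{p,m}} h_{\mv g}\, g_j = \int_{\simplexp} \Bigl(\sum_{\mv g} \phi_{\mv g}(\mv v)\, g_j\Bigr) H(\diff\mv v) = \int_{\simplexp} v_j\, H(\diff\mv v) = 1,
\]
the last equality being \eqref{H:constr}. Hence $A_m = A_{\mv h}$ lies in $\Aclassm$, and the whole difficulty is shifted onto the construction of the weights, which is the one genuinely combinatorial point and the main obstacle. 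Writing $m v_j = k_j + \theta_j$ with $k_j = \lfloor m v_j\rfloor$ and $\theta_j \in [0,1)$, the quantity $s := \sum_j \theta_j = m - \sum_j k_j$ is an integer in $\{0,\ldots,p-1\}$. The admissible nearby grid points are the $\mv g^{(S)}$ obtained by rounding the coordinates in a set $S$ of size $s$ upward, that is $g^{(S)}_j = (k_j + \ind(j\in S))/m$; each satisfies $\sum_j g^{(S)}_j = 1$ and $\|\mv g^{(S)} - \mv v\|_\infty \le 1/m$. Imposing $\sum_S \phi_{\mv g^{(S)}}(\mv v)\,\mv g^{(S)} = \mv v$ then amounts to finding a probability distribution on the size-$s$ subsets $S$ whose marginal inclusion probabilities are exactly $\theta_1,\ldots,\theta_p$. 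Such a distribution exists because $(\theta_1,\ldots,\theta_p)$, having entries in $[0,1]$ and sum $s$, lies in the hypersimplex, which is the convex hull of the indicator vectors of size-$s$ subsets; a Carathéodory argument supplies the weights, and fixing a deterministic rule (for instance systematic sampling) makes $\mv v \mapsto \phi_{\mv g}(\mv v)$ Borel measurable, so the integrals defining $h_{\mv g}$ are well posed.

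Finally I would estimate the error. By linearity of \eqref{A:h} in $H$,
\[
  A_m(\mv w) - A(\mv w) = \int_{\simplexp} \Bigl(\sum_{\mv g}\phi_{\mv g}(\mv v)\max_j\{w_j g_j\} - \max_j\{w_j v_j\}\Bigr) H(\diff \mv v).
\]
Convexity of $\mv x \mapsto \max_j\{w_j x_j\}$ shows by Jensen that the integrand is nonnegative (so $A_m \ge A$), while the pointwise estimate $|\max_j\{w_j g_j\} - \max_j\{w_j v_j\}| \le \max_j w_j|g_j-v_j| \le \|\mv g-\mv v\|_\infty \le 1/m$, valid because $w_j \le 1$ and $\phi_{\mv g}(\mv v)>0$ only for $\|\mv g-\mv v\|_\infty\le 1/m$, bounds the integrand by $1/m$ uniformly in $\mv w$. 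Combining this with the total mass identity $\int_{\simplexp} H(\diff\mv v) = \int_{\simplexp}\sum_j v_j\, H(\diff\mv v) = p$, which again uses \eqref{H:constr}, integration yields $\sup_{\mv w\in\simplexp}|A_m(\mv w)-A(\mv w)| \le p/m$, a fortiori establishing \eqref{eq:approx}. The error bound is thus routine once the moment-preserving rounding of the third paragraph is in place.
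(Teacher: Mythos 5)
Your argument is correct, but it takes a genuinely different route from the paper's. The paper discretizes $H$ by \emph{deterministic} rounding: each cell $\Delta_{p-1,\mv{v},m}$ of a grid partition of $\simplexp$ sends its whole mass to the corner $\mv{v}$, producing a measure $G_m$ whose first moments are only approximately one (equal to $1-c_j/m$ with $c_j\in[0,1)$); a second step then restores the constraints \eqref{H:constr} exactly by shrinking $G_m$ by a factor $1-a_0$ and adding small atoms at the vertices $\mv{e}_1,\dots,\mv{e}_{p-1}$. The error correspondingly splits into a discretization part, bounded by $(p-1)p/m$, and a correction part, bounded by $p/m$, whence $p^2/m$. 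You instead perform a \emph{moment-preserving randomized} rounding: the mass at $\mv{v}$ is spread over the neighbouring grid points $\mv{g}^{(S)}$ with weights whose barycenter is exactly $\mv{v}$, so \eqref{H:constr} holds with no correction step, and the only error is the Lipschitz bound $1/m$ integrated against the total mass $H(\simplexp)=p$, giving the sharper bound $p/m$ (with the pleasant by-product $A_m\ge A$ via Jensen). The price is the combinatorial ingredient you rightly single out: the existence of a probability distribution on size-$s$ subsets of $\{1,\dots,p\}$ with prescribed inclusion probabilities $\theta_1,\dots,\theta_p$ summing to $s$, i.e.\ the fact that $\{x\in[0,1]^p:\sum_j x_j=s\}$ is the convex hull of the indicator vectors of size-$s$ subsets. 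That fact is standard (every vertex of this polytope has at least $p-1$ coordinates in $\{0,1\}$, hence all of them), and your remark about fixing a deterministic rule such as systematic sampling is the right way to make $\mv{v}\mapsto\phi_{\mv{g}}(\mv{v})$ Borel and the $h_{\mv{g}}$ well defined, though in a final write-up this measurability point deserves an explicit sentence rather than an appeal to Carath\'eodory alone, since an arbitrary choice of convex representation need not be measurable. In summary: both proofs are valid; yours is sharper (constant $p$ instead of $p^2$, which a fortiori gives \eqref{eq:approx}) and structurally cleaner, while the paper's is more elementary in that it needs no sampling-design or hypersimplex-vertex argument.
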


The bound in \eqref{eq:approx} implies that $\sup_{A \in \Aclass} \inf_{\tilde{A} \in \Aclassm} \norm{ \tilde{A} - A }_2 = O(m^{-1})$ as $m \to \infty$. This rate is perhaps not sharp, for in case $p = 2$, Lemma~2 in \citet{FGS08} states the rate $O(m^{-3/2})$. It remains an open problem whether the latter rate can also be achieved in general dimension $p$.


In practice, the task is to compute the vector $\hat{\mv{h}}$ such that the function
\[
\Aprojm (\mv{w}) = A_{\hat{\mv{h}}}(\mv{w}) =  \sum_{\mv{v} \in \Vc_{p,m}}  \hat{h}_{\mv{v}} \max \{ w_{1} v_{1} , \dots , w_{p} v_{p} \}, \quad \mv{w}  \in \simplexp,
\]
solves \eqref{eq:Aprojm}. The vector $\hat{\mv{h}}$ is given as the solution to the least-squares problem
\begin{equation}
 \label{h:optim}
  \hat{\mv{h}} = \argmin_{\mv{h}} \norm{ A_{\mv{h}} - \An }_2^2 = \argmin_{\mv{h}} \bigl( \norm{A_{\mv{h}}}_2^2 - 2 \inpr{A_{\mv{h}}}{\An} \bigr),
\end{equation}
with $\mv{h}$ subject to the constraints \eqref{constraints}. The optimisation problem in \eqref{h:optim} is a quadratic program with linear constraints, which in matrix notation reads
\begin{equation}
\label{eq:quadprog}
  \mv{\hat{h}}  = \argmin_{\mv{h}} \biggl( \frac{1}{2} \mv{h}^{\top} \mv{D}  \mv{h} -  \mv{d}^{\top} \mv{h} \biggr), 
  \quad \text{subject to} \quad
\begin{cases}
\mv{C}\mv{h} = \mv{c}, \\
\mv{h} \ge \mv{0}.
\end{cases}
\end{equation}
The matrix $\mv{D}$ and the vector $\mv{d}$ regroup all the scalar products of the form 
\[
  \int_{\simplexp } \max ( \mv{w} \, \mv{v}) \, \max ( \mv{w} \, \mv{v}') \, \diff \mv{w}
  \qquad \text{and} \qquad 
  \int_{\simplexp}   \max ( \mv{w} \, \mv{v}) \, \An ( \mv{w} ) \, \diff \mv{w}
\]
respectively, for $\mv{v}, \mv{v}' \in \Vc_{p,m}$. The $p$ equality constraints $\sum_{\mv{v} \in \Vc_{p,m}}  h_{\mv{v}} \, v_j = 1$ are encoded by means of the matrix $\mv{C}$ and the vector $\mv{c}$. 

For implementation, we used the R-package \textsf{quadprog} \citep{W10} for solving quadratic programs under linear constraints. Although there exist multiple packages for numerical multivariate integration, we preferred to compute all the integrals appearing in $\mv{D}$ and $\mv{d}$ using Riemann sums on the same fine grid. By so doing we reduce the risk of numerical problems as we impose $\mv{D}$ to be positive definite.


The derivation of the asymptotics of the projection estimator follows the same lines as in \citet{FGS08}. Assume that $\eps_n^{-1} (\An - A) \dto \zeta$ in $L^2(\Delta_{p-1})$ where $\zeta$ is a random process in $L^2(\Delta_{p-1})$ and $0 < \eps_n \to 0$; for the Pickands and CFG estimators, we have $\eps_n = n^{-1/2}$ and we have weak convergence with respect to the uniform topology, which implies convergence with respect to the $L^2$-topology.

By Lemma~1 in \citet{FGS08}, we have
\[
  \norm{ \Aprojm - \Aproj }_2 \le [ \delta_{m} \{  2 \norm{ \An - \Aproj }_2  + \delta_{m} \} ]^{1/2},
\]
with $\delta_{m} =  \norm{ \Aproj - \Pi ( \Aproj \vert \Aclassm ) }_2 $. From Lemma~\ref{lemma:dens} above, we have $\delta_m = O(1/m)$ as $m \to \infty$. As a consequence, if $m = m_n$ is such that $1/m_n = o(\eps_n)$ as $n \to \infty$, then $\norm{ \Aprojm - \Aproj }_2 = o_{\mathrm{P}}( \epsilon_{n}) $. From \citet[Theorem~1]{FGS08}, we deduce that
\begin{equation}
\label{eq:proj:conv}
  \eps_n^{-1} (\Aprojm - A ) = \eps_n^{-1} (\Aproj - A) + o_{\mathrm{P}}(1) \dto \Pi \bigl( \zeta | \mathrm{T}_{\Aclass}(A) \bigr)
  \qquad (n \to \infty)
\end{equation}
in the space $L^{2}( \simplexp )$, where $\mathrm{T}_{\Aclass}(A)$ is the tangent cone of $\Aclass$ at $A$, defined as the $L^2$-closure of $\{ \lambda (  \tilde{A} - A ): \lambda \ge 0 , \tilde{A} \in \Aclass \}$. 

Interestingly, equation~\eqref{eq:proj:conv} implies that the choice of $m$ is not to be seen as a bias-variance trade-off problem but rather as a discretization problem. As soon as $m = m_n$ converges to infinity faster than $1/\eps_n$, the finite-dimensional projection estimator $\Aprojm$ has the same limit behaviour as the `ideal' projection estimator $\Aproj$. In practice, we will choose $m$ sufficiently large so that any further increase of $m$ does not make any significant difference, of course subject to constraints on computing time and numerical stability.

Finally, note that the convergence in \eqref{eq:proj:conv} is with respect to the $L^2$-topology only, even if originally the weak convergence of $\eps_n^{-1} (\An - A)$ took place in the stronger $\ell^\infty$-topology. The asymptotic distribution of the projection estimator under the $\ell^\infty$-topology remains an open problem.

\section{Simulation study}
\label{s:simul}

A simulation experiment was conducted to compare the finite-sample performance of the following four estimators: 
\begin{compactitem}
\item[PD --] the endpoint-corrected Pickands estimator in \eqref{eq:pick:corr} with $\lambda_j(\mv{w}) = w_j$, in the spirit of \citet{Deheuvels91};
\item[PD-pr --] the projection estimator in \eqref{eq:Aprojm} with the previous end-point corrected Pickands estimator as initial estimator;
\item[CFG --] the endpoint-corrected CFG estimator in \eqref{eq:cfg:corr} with $\lambda_j(\mv{w}) = w_j$;
\item[CFG-pr --] the projection estimator in \eqref{eq:Aprojm} with the previous end-point corrected CFG estimator as initial estimator.
\end{compactitem}

The set-up of the experiment was as follows. Following \citet{ZWP08} and \citet{GdS11}, random samples were generated from a trivariate extreme-value distribution with asymmetric logistic dependence function $A$ \citep{tawn1990}:
\begin{multline}
\label{E:logistic}
  A(\mv{w}) = 
  (\theta^{1/\alpha} w_{1}^{1/\alpha} + \phi^{1/\alpha} w_{2}^{1/\alpha} )^{\alpha} + (\theta^{1/\alpha} w_{2}^{1/\alpha} + \phi^{1/\alpha} w_{3}^{1/\alpha} )^{\alpha} \\
  + (\theta^{1/\alpha} w_{3}^{1/\alpha} + \phi^{1/\alpha} w_{1}^{1/\alpha} )^{\alpha} + \psi ( w_{1}^{1/\alpha} + w_{2}^{1/\alpha} + w_{3}^{1/\alpha}   )^{\alpha} 
  + 1 - \theta - \phi -\psi,
\end{multline}
for $\mv{w} \in \simplextwo$, with parameter vector $( \alpha, \theta, \phi, \psi) \in (0,1] \times [0, 1]^3$. For this model, Assumption~\ref{Assump:2} can be verified by direct calculation. The dependence parameter $\alpha$ ranged from $0.3$ (high dependence) to $1$ (independence, $A \equiv 1$) and the vector $(\phi, \psi, \theta)$ was set equal to either $(0, 1, 0)$ (symmetric logistic copula or Gumbel copula) and $(0.3, 0, 0.6)$ (an asymmetric logistic copula). For each distribution, $1000$ samples were generated of size $n \in \{ 50, 100, 200 \}$. Simulations were performed using the R-package \textsf{evd} \citep{evd}, which implements the algorithms presented in \citet{Stephenson03}. The discretization parameter $m$ was set to $20$, at which value the grid $\Vc_{3, 20}$ contains $231$ points.

Monte-Carlo approximations for the mean integrated squared error (MISE) $\E [ \int (\hat{A} - A)^2 ]$ for the four estimators considered above are reported in the tables below. The three main findings are the following:
\begin{compactenum}[1-]
\item The projection step yields a gain in efficiency, especially in case of weak dependence.
\item Without projection step, the CFG estimator outperforms the PD estimator.
\item After the projection step, the PD-pr estimator is more efficient than the CFG-pr estimator in case of independence and weak dependence ($\alpha \ge 0.9$), but less efficient otherwise ($\alpha \le 0.7$).
\end{compactenum}
Further, as the dependence increases, all estimators tend to perform better. In accordance with asymptotic theory, the MISE is roughly proportional to $1/n$.

\begin{table}[H]
\begin{center}\small
\begin{tabular}{l@{\quad}l@{\quad}ccccc}
\toprule
	    & &  $\alpha=0.3$ & $\alpha=0.5$  & $\alpha=0.7$  & $\alpha=0.9$ & $\alpha = 1$ \\
\midrule
$n=50$  & PD &	$1.40 \cdot 10^{-4}$	& $5.44 \cdot 10^{-4}$	& $1.36 \cdot 10^{-3}$	& $2.68 \cdot 10^{-3}$ & $3.44 \cdot 10^{-3}$ \\
	    & PD-pr & $1.37 \cdot 10^{-4}$	& $5.14 \cdot 10^{-4}$	& $1.21 \cdot 10^{-3}$	& $2.08 \cdot 10^{-3}$ & $2.44 \cdot 10^{-3}$ \\[1ex]
  & CFG         &	$9.77 \cdot 10^{-5}$	& $4.27 \cdot 10^{-4}$	& $1.26 \cdot 10^{-3}$	& $2.54 \cdot 10^{-3}$ & $3.48 \cdot 10^{-3}$ \\
	    & CFG-pr  & $9.69 \cdot 10^{-5}$	& $4.22 \cdot 10^{-4}$	& $1.22 \cdot 10^{-3}$	& $2.43 \cdot 10^{-3}$ & $3.31 \cdot 10^{-3}$ \\[1em]
$n=100$ & PD &	$7.08 \cdot 10^{-5}$	& $2.84 \cdot 10^{-4}$	& $7.06 \cdot 10^{-4}$	& $1.34 \cdot 10^{-3}$ & $1.69 \cdot 10^{-3}$ \\
	    & PD-pr & $6.99 \cdot 10^{-5}$	& $2.74 \cdot 10^{-4}$	& $6.53 \cdot 10^{-4}$	& $1.03 \cdot 10^{-3}$ & $1.08 \cdot 10^{-3}$ \\[1ex]
 & CFG         &	$5.03 \cdot 10^{-5}$	& $2.39 \cdot 10^{-4}$	& $6.56 \cdot 10^{-4}$	& $1.23 \cdot 10^{-3}$ & $1.48 \cdot 10^{-3}$ \\
	    & CFG-pr  & $5.01 \cdot 10^{-5}$	& $2.37 \cdot 10^{-4}$	& $6.47 \cdot 10^{-4}$	& $1.18 \cdot 10^{-3}$ & $1.36 \cdot 10^{-3}$ \\[1em]
$n=200$	& PD & $3.31 \cdot 10^{-5}$	& $1.43 \cdot 10^{-4}$	& $3.92 \cdot 10^{-4}$	& $7.02 \cdot 10^{-4}$ & $8.71 \cdot 10^{-4}$ \\
	    & PD-pr & $3.29 \cdot 10^{-5}$	& $1.40 \cdot 10^{-4}$	& $3.73 \cdot 10^{-4}$	& $5.72 \cdot 10^{-4}$ & $5.14 \cdot 10^{-4}$ \\[1ex]
 & CFG         & $2.45 \cdot 10^{-5}$  & $1.23 \cdot 10^{-4}$	& $3.39 \cdot 10^{-4}$	& $6.40 \cdot 10^{-4}$ & $6.56 \cdot 10^{-4}$ \\
	    & CFG-pr  & $2.45 \cdot 10^{-5}$	& $1.23 \cdot 10^{-4}$	& $3.36 \cdot 10^{-4}$	& $6.19 \cdot 10^{-4}$ & $5.78 \cdot 10^{-4}$ \\
\bottomrule
\end{tabular}
\end{center}
\caption{Symmetric logistic dependence function, $(\phi, \psi, \theta) = (0, 1, 0)$: Monte-Carlo approximation of the MISE of four estimators of $A$ based on $1000$ random samples}
\label{sym}
\end{table}

\begin{table}[H]
\begin{center}\small
\begin{tabular}{l@{\qquad}l@{\qquad}cccc}
\toprule
    &    &  $\alpha=0.3$ & $\alpha=0.5$  & $\alpha=0.7$  & $\alpha=0.9$ \\
\midrule
$n=50$  & PD	& $ 1.42 \cdot 10^{-3}$   & $1.72 \cdot 10 ^{-3}$ & $2.20 \cdot 10^{-3}$  & $2.88 \cdot 10^{-3}$ \\
	    & PD-pr & $1.22 \cdot 10^{-3}$	& $1.45 \cdot 10^{-3}$	  & $1.74 \cdot 10{-3}$   & $2.01 \cdot 10^{-3}$ \\[1ex]
  &  CFG         &	$1.15 \cdot 10^{-3}$ & 	$1.41 \cdot 10^{-3}$	& $1.84 \cdot 10^{-3}$  &	$2.77 \cdot 10^{-3}$\\
	    &  CFG-pr  &  $1.10 \cdot 10^{-3}$ &	 $1.35 \cdot 10^{-3}$ & 	$1.75 \cdot 10^{-3}$	& $2.60 \cdot 10^{-3}$ \\[1em]
$n=100$	& PD & $7.67 \cdot 10^{-4}$	& $8.76 \cdot 10^{-4}$	  & $1.10 \cdot 10^{-3}$	& $1.51 \cdot 10^{-3}$ \\
	    & PD-pr & $6.77 \cdot 10^{-4}$	& $7.70 \cdot 10^{-4}$	  & $9.00 \cdot 10^{-4}$	& $1.08 \cdot 10^{-3}$ \\[1ex]
 &  CFG	         &  $5.90 \cdot 10^{-4}$	& $7.06 \cdot 10^{-4}$ & 	$9.05 \cdot 10^{-4}$	& $1.20 \cdot 10^{-3}$ \\
	    &  CFG-pr  & $5.70 \cdot 10^{-4}$ & 	$6.85 \cdot 10^{-4}$ &	$8.68 \cdot 10^{-4}$ &	$1.10 \cdot 10^{-3}$ \\[1em]
$n=200$	& PD & $3.92 \cdot 10^{-4}$	& $4.72 \cdot 10^{-4}$	  & $5.84 \cdot 10^{-4}$	& $7.60 \cdot 10^{-4}$ \\
      & PD-pr  & $3.52 \cdot 10^{-4}$	& $4.30 \cdot 10^{-4}$	  & $5.08 \cdot 10^{-4}$	& $5.19 \cdot 10^{-4}$ \\[1ex]
&  CFG          & $3.01 \cdot 10^{-4}$ & 	$3.31 \cdot 10^{-4}$ &	$4.43 \cdot 10^{-4}$ &	$5.81 \cdot 10^{-4}$\\
	    &  CFG-pr  & $2.92 \cdot 10^{-4}$ & 	$3.22 \cdot 10^{-4}$ &  $4.29 \cdot 10^{-4}$	& $5.36 \cdot 10^{-4}$\\
\bottomrule
\end{tabular}
\end{center}
\caption{Asymmetric logistic dependence function, $(\phi, \psi, \theta) = (0.3, 1, 0.6)$: Monte-Carlo approximation of the MISE of four estimators of $A$ based on $1000$ random samples}
\label{asym}
\end{table}

\small
\bibliographystyle{chicago}
\bibliography{MVEC-proj}

\appendix
\section{Proofs}
\label{s:proofs}

\subsection{Proof of Proposition~\ref{thm:empcopproc}}
If $\mv{u} \in  (0, 1]^p$, then $-\log \mv{u} \in W_{p,j}$ and
\begin{equation}
\label{eq:Cdotj}
  \dot{C}_j(\mv{u}) = \frac{C(\mv{u})}{u_j} \, \dot{\ell}_j(-\log \mv{u}).
\end{equation}
The assumptions on $\ell$ imply continuity of $\dot{C}_{j}$ on the set $(0, 1]^p$. If $\mv{u} \in [0, 1]^p$ with $u_j > 0$ and $u_i = 0$ for some $i \in \{1, \ldots, p\} \setminus \{j\}$, then $\dot{C}_j(u) = 0$ and continuity of $\dot{C}_{j}$ at such $\mv{u}$ follows from the fact that $0 \le \dot{\ell}_j \le 1$ and $0 \le C(\mv{v}) \le \min(\mv{v})$.

If (L2) holds, then also
\[
  \sup_{\mv{x} \in W_{p,i} \cap W_{p,j}} \max(x_i, x_j) \, |\ddot{\ell}_{i,j}( \mv{x} )| < \infty,
\]
that is, without the condition $x_1 + \cdots + x_p = 1$.
This result is based on the fact that the function $\ell$ is homogeneous of order one: $\ell(s \mv{x} ) = s \, \ell(\mv{x})$ for all $s \in [0, \infty)$ and $\mv{x} \in [0, \infty)^p$. Hence if $\dot{\ell}_j$ exists on $W_{p,j}$, then for all $s \in (0, \infty)$ and $\mv{x} \in W_{p,j}$ we have
\[
  s \, \dot{\ell}_j(s \mv{x} ) = \frac{\partial}{\partial x_j} \ell(s \mv{x} ) = \frac{\partial}{\partial x_j} s \, \ell(\mv{x}) = s \, \dot{\ell}_j(\mv{x})
\]
and thus
\[
  \dot{\ell}_j(s\mv{x}) = \dot{\ell}_j(\mv{x}).
\]
Taking partial derivatives again, we find for all $s \in (0, \infty)$ and $\mv{x} \in W_{p,i} \cap W_{p,j}$ that
\[
  s \, \ddot{\ell}_{ij}(s\mv{x}) = \frac{\partial}{\partial x_i} \dot{\ell}_j(s\mv{x}) = \frac{\partial}{\partial x_i} \dot{\ell}_j(\mv{x}) = \ddot{\ell}_{ij}(\mv{x})
\]
and thus
\[
  \ddot{\ell}_{ij}(s\mv{x}) = s^{-1} \, \ddot{\ell}_{ij}(\mv{x}).
\]
It follows that
\[
  \max(sx_i, sx_j) \, \ddot{\ell}_{ij}(s\mv{x}) = \max(x_i, x_j) \, \ddot{\ell}_{ij}(\mv{x}),
\]
that is, the map $\mv{x} \mapsto \max(x_i, x_j) \, \ddot{\ell}_{ij}(\mv{x})$ is constant on rays through the origin. 

Next, we show the equivalence of (L2) and (C2). Fix $i, j \in \{1, \ldots, p\}$, not necessarily distinct and  let $\mv{u} \in V_{p,i} \cap V_{p,j}$. On the one hand, if $\mv{u} \in V_{p,i} \cap V_{p,j} \cap (0, 1]^p$ (meaning that every component of $\mv{u}$ is different from 0), then $-\log \mv{u} \in W_{p,i} \cap W_{p,j}$ and
\[
  \ddot{C}_{ij}(\mv{u}) = 
  \begin{cases}
    \displaystyle \frac{C(\mv{u})}{u_i u_j} \, \bigl( \dot{\ell}_i \dot{\ell}_j - \ddot{\ell}_{ij} \bigr) & \text{if $i \ne j$}, \\[1em]
    \displaystyle \frac{C(\mv{u})}{u_j^2} \, \bigl( \dot{\ell}_j^2 - \dot{\ell}_j - \ddot{\ell}_{jj} \bigr) & \text{if $i = j$},
  \end{cases}
\]
with the convention that the partial derivatives of $\ell$ are evaluated in $-\log \mv{u}$. On the other hand, if $\mv{u} \in (V_{p,i} \cap V_{p,j}) \setminus (0, 1]^p$ (i.e.\ at least one coordinate of $\mv{u}$ vanishes), then $\ddot{C}_{ij}(\mv{u}) = 0$.

We have to verify two things: first, the continuity of $\ddot{C}_{ij}$ at points in the set $(V_{p,i} \cap V_{p,j}) \setminus (0, 1]^p$; secondly, the finiteness of the supremum in (C2).

First, let $\mv{u} \in V_{p,i} \cap V_{p,j} \cap (0, 1]^p$. Let $K$ be a positive constant not smaller than the supremum in (L2). By assumption (L2) and  the fact that $0 \le \dot{\ell}_j \le 1$, we have
\[
  |\ddot{C}_{ij}(\mv{u})| \le \frac{\min(\mv{u})}{u_i u_j} \, \biggl( 1 + \frac{K}{\max(-\log u_i, -\log u_j)} \biggr).
\]
Continuity of $\ddot{C}_{ij}$ at points in the set $V_{p,i} \cap V_{p,j} \setminus (0, 1]^p$ follows. 

Secondly, as $\min(\mv{u}) / (u_i u_j) \le \min(1/u_i, 1/u_j)$ and $- \log x \ge 1-x$ for all positive $x$, 
\begin{align*}
  |\ddot{C}_{ij}(\mv{u})| 
  &\le \min \biggl(\frac{1}{u_i}, \frac{1}{u_j}\biggr) \, \biggl\{ 1 + K \, \min \biggl( \frac{1}{1-u_i}, \frac{1}{1-u_j} \biggr) \biggr\} \\
  &\le (1+K) \, \min \biggl(\frac{1}{u_i}, \frac{1}{u_j}\biggr) \, \min \biggl( \frac{1}{1-u_i}, \frac{1}{1-u_j} \biggr) \\
  &\le (1+K) \, \min \biggl( \frac{1}{u_i(1-u_i)}, \, \frac{1}{u_j(1-u_j)} \biggr),
\end{align*}
which is equivalent to condition (C2).

\subsection{Proof of Theorem \ref{thm:main}}

The proof of theorem \ref{thm:main} will require the following preliminary result on weighted empirical copula processes. Recall the process $\alpha_n$ in \eqref{e:empproc}. Define $q_{\theta }(t) = t^{\theta} (1-t)^{\theta }$ for $t \in (0,1)$ and  a fixed value $\theta \in (0,1/2)$. Write $\mathbb{E} = (0,1]^{p} \setminus \{ (1, \dots , 1) \}$. Define the process $\mathbb{G}_{n,\theta}$ on $[0, 1]^p$ by
\begin{equation}
\mathbb{G}_{n, \theta}(\mv{u}) = 
  \begin{cases}
    \frac{\alpha_{n}( \mv{u} )}{q_{\theta}(\min(\mv{u})) }
    & \text{if $\mv{u} \in \mathbb{E}$}, \\
    0 & \text{if $\mv{u} \in [0, 1]^p \setminus \mathbb{E}$.}
  \end{cases}
\label{Gng}
\end{equation}
Similarly, define the process $\mathbb{G}_\theta$ on $[0, 1]^p$ by replacing $\alpha_n$ in \eqref{Gng} by its weak limit $\alpha$, see \eqref{e:alpha}. The following result generalizes Theorem~G.1 in \citet{GS09}.

\begin{lemma}
For every $\theta \in (0, 1/2)$, the trajectories of $\mathbb{G}_\theta$ are continuous almost surely and $\mathbb{G}_{n,\theta} \dto \mathbb{G}_\theta$ in $\ell^\infty([0,1]^p)$.
\label{lemmaGn}
\end{lemma}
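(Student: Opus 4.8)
The plan is to deduce the weighted convergence from the already-available convergence $\alpha_n \dto \alpha$ in $\ell^\infty([0,1]^p)$, treating division by $q_\theta(\min(\mv{u}))$ as a continuous operation on the interior and showing that the two degenerate boundaries --- the union of faces $\{\min(\mv{u}) = 0\}$ and the corner $(1, \ldots, 1)$ (where $\min(\mv{u}) = 1$) --- contribute asymptotically negligibly. Concretely, for $\delta \in (0, 1/2)$ set $K_\delta = \{\mv{u} \in [0,1]^p : \delta \le \min(\mv{u}) \le 1 - \delta\}$, a compact subset of $\mathbb{E}$ on which $q_\theta(\min(\cdot))$ is continuous and bounded below by a positive constant, and define the truncated maps $T_\delta f(\mv{u}) = f(\mv{u})/q_\theta(\min(\mv{u}))$ for $\mv{u} \in K_\delta$ and $T_\delta f(\mv{u}) = 0$ otherwise. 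I would then combine three ingredients through a standard approximation lemma for weak convergence in $\ell^\infty$ (a converging-together argument, see \citet{VW96}): (i) $T_\delta \alpha_n \dto T_\delta \alpha$ for each fixed $\delta$; (ii) $\|\mathbb{G}_\theta - T_\delta\alpha\|_\infty \to 0$ almost surely as $\delta \downarrow 0$; and (iii) the uniform boundary estimate $\lim_{\delta \downarrow 0} \limsup_{n} \Pr(\|\mathbb{G}_{n,\theta} - T_\delta \alpha_n\|_\infty > \eps) = 0$ for every $\eps > 0$. Ingredient (i) is immediate, since on $K_\delta$ the denominator is a fixed positive continuous function, so $T_\delta$ is continuous and the continuous mapping theorem applies to $\alpha_n \dto \alpha$.

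The heart of the matter, and the step I expect to be the main obstacle, is ingredient (iii): a maximal inequality for the weighted process on the boundary zone $[0,1]^p \setminus K_\delta$. The driving estimate is $\var(\alpha_n(\mv{u})) = C(\mv{u})(1 - C(\mv{u}))$ together with $C(\mv{u}) \le \min(\mv{u})$ near the lower faces and $1 - C(\mv{u}) \le \sum_j (1 - u_j) \le p(1 - \min(\mv{u}))$ near the corner. Dividing by $q_\theta(\min(\mv{u}))^2$ shows that the second moment of $\mathbb{G}_{n,\theta}(\mv{u})$ is of order $(\min(\mv{u}))^{1-2\theta}$ near $\{\min(\mv{u}) = 0\}$ and of order $(1 - \min(\mv{u}))^{1-2\theta}$ near $(1, \ldots, 1)$, both vanishing precisely because $\theta < 1/2$. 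I would upgrade this pointwise bound to a uniform one by dyadic peeling: decompose the lower zone into slabs $\{2^{-(k+1)} < \min(\mv{u}) \le 2^{-k}\}$, on which $q_\theta(\min(\mv{u})) \asymp 2^{-k\theta}$, bound $\E \sup_{\min(\mv{u}) \le 2^{-k}} |\alpha_n(\mv{u})|^2$ by a localized maximal inequality of order $2^{-k}$ (up to logarithmic factors), so that $\E \sup_{\text{slab }k} |\mathbb{G}_{n,\theta}|^2 \lesssim 2^{-k(1-2\theta)}$ is summable, and then sum the tail over $k \ge k_0(\delta)$; an analogous decomposition in $1 - \min(\mv{u})$ handles the corner. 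The genuinely multivariate feature, absent in the bivariate Theorem~G.1 of \citet{GS09}, is that $\{\min(\mv{u}) \le \delta\} = \bigcup_{j=1}^p \{u_j \le \delta\}$ is a union of $p$ slabs meeting along lower-dimensional edges; I would deal with this by a union bound over $j$ and, on each slab, reduce the relevant increments to the lower-orthant class in $\RR^p$, which is Vapnik--Chervonenkis, so that the localized maximal inequality holds uniformly in $n$ once the probability-integral transform renders the marginals uniform.

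With (iii) in hand, ingredient (ii) follows from the same variance and dyadic estimates applied to the Gaussian limit $\alpha$, whose increments share the covariance structure, giving $\|\mathbb{G}_\theta - T_\delta\alpha\|_\infty = \sup_{\mv{u}\in\mathbb{E}\setminus K_\delta}|\mathbb{G}_\theta(\mv{u})| \to 0$ almost surely as $\delta\downarrow0$. This same boundary control also yields the asserted almost-sure continuity of $\mathbb{G}_\theta$: on the interior $\mathbb{G}_\theta = \alpha/q_\theta(\min(\cdot))$ is continuous because $\alpha$ has continuous trajectories and $q_\theta(\min(\cdot))$ is continuous and positive there, while the vanishing of the intrinsic semimetric near the degenerate boundary forces $\mathbb{G}_\theta(\mv{u}) \to 0$ as $\mv{u}$ approaches it, so the extension by $0$ is continuous. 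Assembling (i)--(iii) via the converging-together lemma then delivers $\mathbb{G}_{n,\theta} \dto \mathbb{G}_\theta$ in $\ell^\infty([0,1]^p)$, with the continuity statement obtained along the way.
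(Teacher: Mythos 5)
Your argument is correct in outline but follows a genuinely different route from the paper. The paper treats the weighted process directly as an empirical process indexed by the function class $\{f_{\mv{u}}\}$ with $f_{\mv{u}}(\mv{s})=\{\1_{(\mv{0},\mv{u}]}(\mv{s})-C(\mv{u})\}/q_{\theta}(\min(\mv{u}))$, verifies that this class is Donsker via Theorem~2.6.14 of \citet{VW96} --- the whole point being that the natural envelope $p\max_j\{s_j^{-\theta},(1-s_j)^{-\theta}\}$ is square-integrable under $C$ precisely because the margins are uniform and $\theta<1/2$, while the VC-major and pointwise-separability conditions come from the orthant structure --- and then transports the convergence to $\ell^\infty([0,1]^p)$ through the index map $\mv{u}\mapsto f_{\mv{u}}$. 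You instead truncate away from the degenerate boundary and invoke a converging-together lemma, which concentrates all the work in your step (iii); that step is the classical Chibisov--O'Reilly mechanism and does go through: on the slab $\{2^{-(k+1)}<\min(\mv{u})\le 2^{-k}\}$ the orthant indicators admit the envelope $\1(\min(\mv{s})\le 2^{-k})$ with second moment at most $p\,2^{-k}$, so the localized maximal inequality for VC classes (Theorem~2.14.1 of \citet{VW96}) gives $\E \sup_{\mathrm{slab}\ k}|\alpha_n|\lesssim 2^{-k/2}$ uniformly in $n$, and dividing by $q_\theta\asymp 2^{-k\theta}$ yields the summable bound $2^{-k(1/2-\theta)}$; the corner is handled symmetrically via $1-C(\mv{u})\le p\,(1-\min(\mv{u}))$. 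The two proofs ultimately rest on the same quantitative fact, namely that $C(\mv{u})(1-C(\mv{u}))/q_\theta^2(\min(\mv{u}))$ vanishes at the degenerate boundary (the paper needs this anyway for the continuity of the index map, hence of $\mathbb{G}_\theta$), but the Donsker-class argument packages the required uniformity for free, whereas your route must supply it by hand through the dyadic peeling; in exchange, your argument makes more transparent why $\theta<1/2$ is the exact threshold and why the multivariate boundary $\{\min(\mv{u})\le\delta\}=\bigcup_{j}\{u_j\le\delta\}$ creates no difficulty beyond a union bound over $j$.
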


\begin{proof}[Lemma~\ref{lemmaGn}]
The proof is entirely analogue as the one of Theorem~G.1 in \citet{GS09}. For completeness, we sketch the main lines.

Fix $\mv{u} \in \mathbb{E}$ and define the mapping $ f_{\mv{u}}: \mathbb{E} \to \mathbb{R} $ by
\begin{equation*}
 f_{\mv{u}}(\mv{s}) = \frac{\mv{1}_{(\mv{0}, \mv{u}]} (\mv{s})  - C(\mv{u})}{q_{ \theta }( \min(\mv{u}))}, \qquad \mv{s} \in \mathbb{E},
\end{equation*}
and consider the class
\begin{equation*}
\mathcal{F} = \{ f_{\mv{u}}:  \mv{u} \in \mathbb{E}   \} \cup \{ 0 \},
\end{equation*}
where $0$ of course stands for the zero function. The space $\mathcal{F}$ will be endowed with the metric 
\begin{equation}
\label{eq:metric}
  \rho^{2}(f,g) = \proba(f-g)^{2} \qquad f, g \in \mathcal{F}.
\end{equation}
Here, we adopt the notations of \citet{VW96}: $\proba$ denotes the probability distribution on $\mathbb{E}$ corresponding to $C$ and $\proban$ denotes the  empirical measure of the sample $(U_{i1} , \dots , U_{ip})$ for $i \in \{1, \ldots, n\}$, that is
\begin{align*}
\proba f  & =  \int f \, \diff C, & \proban f & =  \frac{1}{n} \sum_{i=1}^{n} f( U_{i,1}, \dots , U_{i,p} ).
\end{align*}
Moreover, put $\mathbb{G}_{n} = n^{1/2} ( \proban - \proba )$, viewed as a random function on $\mathcal{F}$. 

We will show that the collection $\mathcal{F}$ is a $\proba$-Donsker class, i.e.\ there exists a $\proba$-Brownian bridge  $\mathbb{G}$ such that
\[
\mathbb{G}_{n} \dto \mathbb{G} \quad \text{in} \quad \ell^{ \infty } (\mathcal{F}) 
\; \text{as} \; n \to \infty.
\]
It is sufficient to verify the conditions of Theorem~2.6.14 of \citet{VW96}. The function $F$ on $\mathbb{E}$ defined by
\begin{equation*}
F( s_{1}, \dots , s_{p }) = p \max \bigl\{ s_{1}^{-\theta}, \ldots, s_{p}^{-\theta}, (1-s_{1})^{-\theta}, \ldots, (1- s_{p})^{-\theta} \bigr\}
\end{equation*}
is a suitable envelope function for $\mathcal{F}$. The fact that $\mathcal{F}$ is a VC-major class and is pointwise separable follows from the same arguments as in \citet{GS09}.

For the moment $\mathbb{G}_n$ is defined on $\mathcal{F}$ with the  metric $\rho$ in \eqref{eq:metric}.  Consider the map $\phi : [0, 1]^p \to \mathcal{F}$ defined by $\phi(\mv{u}) =  f_{\mv{u}}$ if $\mv{u} \in \mathbb{E}$ and $\phi(\mv{u}) = 0$ if $\mv{u} \in [0, 1]^p \setminus \mathbb{E}$. Then $\mathbb{G}_{n,\theta} = \mathbb{G}_n \circ \phi$ and $\mathbb{G}_\theta = \mathbb{G} \circ \phi$. The map $\ell^\infty(\mathcal{F}) \to \ell^\infty([0, 1]^p) : z \mapsto z \circ \phi$ being continuous, the continuous mapping theorem permits to conclude that $\mathbb{G}_{n,\theta} \dto \mathbb{G}_{\theta} $ in $\ell^{\infty}([0,1]^p)$. Since the trajectories of $\mathbb{G}$ are $\rho$-continuous almost surely and since $\phi$ is continuous, it follows that the sample paths of $\mathbb{G}_\theta$ are continuous almost surely as well. This concludes the proof of Lemma~\ref{lemmaGn}.
\end{proof}

We now proceed with the proof of Theorem~\ref{thm:main}. Define
\begin{align*}
  B^{\mathrm{P}}_{n}( \mv{w} ) &= n^{1/2} \left( \frac{1}{\APn(\mv{w})} - \frac{1}{A( \mv{w} )} \right), \\
  B^{\mathrm{CFG}}_{n} ( \mv{w} ) &= n^{1/2} \bigl( \log \ACFGn ( \mv{w} ) - \log A( \mv{w} ) \bigr),
\end{align*}
for $\mv{w} \in \simplexp$. Applying the change of variables $u = e^{-s}$ in Lemma~\ref{Acop:dec}, we find that the processes $B^{\mathrm{P}}_{n}$ and $B^{\mathrm{CFG}}_{n}$ can be written as
\begin{equation}
\label{eq:Bn}
  B_n(\mv{w}) = \int_0^\infty \mathbb{C}_{n}(e^{-w_{1}s} , \dots , e^{-w_{p}s}) \, h(s) \, \diff s,
\end{equation}
in terms of a function $h$ on $(0, \infty)$ which is $h^{\mathrm{P}}(s) = 1$ for the Pickands estimator and $h^{\mathrm{CFG}}(s) = 1/s$ for the CFG estimator. In what follows, the function $h$ denotes either $h^{\mathrm{P}}$ or $h^{\mathrm{CFG}}$.

Put $l_n = 1/(n+1)$ and $k_n = p \log(n+1)$ and split the integral on the right-hand side of \eqref{eq:Bn} into three parts:
\begin{equation}
\label{eq:Bn:decomp}
  B_n(\mv{w}) = \int_0^{l_n} + \int_{l_n}^{k_n} + \int_{k_n}^\infty = I_{1,n}(\mv{w}) + I_{2,n}(\mv{w}) + I_{3,n}(\mv{w}).
\end{equation}
We will first prove that with probability one, the first and the third term on the right-hand side converge to zero uniformly in $\mv{w}$. 
\begin{itemize}
\item If $s \in [0, l_n]$, then $e^{-s} \ge 1 - l_n$ and thus $C_n(e^{-w_{1}s} , \dots , e^{-w_{p}s}) = 1$, which implies
\begin{align*}
  0 \le I_{1,n}(\mv{w}) 
  &= \int_0^{l_n} n^{1/2} \bigl( 1 - e^{- s \, A(\mv{w})} \bigr) \, h(s) \, \diff s \\
  &\le n^{1/2} \int_0^{l_n} s \, h(s) \, ds \le n^{1/2} l_n \le n^{-1/2}.
\end{align*}
\item If $s \in (k_n, \infty)$, then $w_j \ge 1/p$ and thus $e^{-w_j s} < 1/(n+1)$ for at least one $j \in \{1, \ldots, p\}$, so that $C_n(e^{-w_{1}s} , \dots , e^{-w_{p}s}) = 0$, which implies
\begin{align*}
  |I_{3,n}(\mv{w})|
  &\le \int_{k_n}^\infty n^{1/2} e^{-s \, A(\mv{w})} \, h(s) \, \diff s \\
  &\le \frac{n^{1/2}}{A(\mv{w})} e^{-k_n \, A(\mv{w})} \le p \, n^{-1/2},
\end{align*}
where we used the fact that $A(\mv{w}) \ge \max(\mv{w}) \ge 1/p$.
\end{itemize}

As a consequence, the only non-negligible term in \eqref{eq:Bn:decomp} is $I_{2,n}$. By Assumption~\ref{Assump:1} and by Proposition~4.2 in \citet{Segers11}, Stute's expansion \eqref{e:stute}--\eqref{e:stute:Rn} is valid, so that we can write
\[
  I_{n,2}(\mv{w}) = J_{0,n}(\mv{w}) - \sum_{j=1}^p J_{j,n}(\mv{w}) + J_{p+1,n}(\mv{w})
\]
where
\begin{align*}
  J_{0,n}(\mv{w}) &= \int_{l_n}^{k_n} \alpha_n(e^{-w_{1}s} , \dots , e^{-w_{p}s}) \, h(s) \, \diff s, \\
  J_{j,n}(\mv{w}) &= \int_{l_n}^{k_n} \alpha_{n,j}(e^{-w_{j}s}) \, \dot{C}_j(e^{-w_{1}s} , \dots , e^{-w_{p}s}) \, h(s) \, \diff s, \qquad j \in \{1, \ldots, p\}, \\
  J_{p+1,n}(\mv{w}) &= \int_{l_n}^{k_n} R_n(e^{-w_{1}s} , \dots , e^{-w_{p}s}) \, h(s) \, \diff s.
\end{align*}
In view of the bound \eqref{e:stute:Rn} on $R_n$, the term $J_{p+1,n}$ is negligible: as $n \to \infty$,
\[
  \sup_{\mv{w} \in \simplexp} |J_{p+1,n}(\mv{w})|
  = O \bigl(n^{-1/4} \, \log(n) \, {\textstyle\int_{l_n}^{k_n} h(s) \, \diff s} \bigr)
  \to 0, \qquad \text{almost surely}.
\]

Fix $\theta \in (0, 1/2)$ and recall the process $\mathbb{G}_{n,\theta}$ in \eqref{Gng}. We have
\begin{align*}
  J_{0,n}(\mv{w}) &= \int_{l_n}^{k_n} \mathbb{G}_{n,\theta}(e^{-sw_1}, \ldots, e^{-sw_p}) \, K_0(s, \mv{w}) \, h(s) \, \diff s, \\
  J_{j,n}(\mv{w}) &= \int_{l_n}^{k_n} \mathbb{G}_{n,\theta}(1, \ldots, 1, e^{-sw_j}, 1, \ldots, 1) \, K_j(s, \mv{w}) \, h(s) \, \diff s, \qquad j \in \{1, \ldots, p\},
\end{align*}
with
\begin{align*}
  K_0(s, \mv{w}) &= q_\theta \bigl( \min( e^{-sw_1}, \ldots, e^{-sw_p} ) \bigr), \\
  K_j(s, \mv{w}) &= q_\theta( e^{-sw_j} ) \, \dot{C}_j( e^{-sw_1}, \ldots, e^{-sw_p} ), \qquad j \in \{1, \ldots, p\}.
\end{align*}
The functions $K_0, \ldots, K_p$ satisfy the bounds
\begin{multline}
\label{eq:Kbound}
  0 \le K_j(s, \mv{w}) \le K(s) = s^\theta \, \1_{(0, 1]}(s) + e^{ - (\theta/p) s } \, \1_{(1, \infty)}(s), \\
  \qquad j \in \{0, \ldots, p\}, \qquad s \in (0, \infty), \qquad \mv{w} \in \simplexp.
\end{multline}
To prove these bounds, use equation~\eqref{eq:Cdotj}, the fact that $0 \le \dot{\ell}_j \le 1$ and $0 \le C(\mv{v}) \le \min(\mv{v})$ for $\mv{v} \in [0, 1]^p$ and the fact that $\max(\mv{w}) \ge 1/p$ for $\mv{w} \in \simplexp$. The function $K$ in \eqref{eq:Kbound} satisfies $\int_0^\infty K(s) \, h(s) \, \diff s < \infty$.

By Lemma~\ref{lemmaGn} and the extended continuous mapping theorem \citep[Theorem~1.11.1]{VW96}, we find
\[
  B_n \dto B = J_0 - \sum_{j=1}^p J_j, \qquad n \to \infty
\]
in $\ell^\infty(\simplexp)$, where
\begin{align*}
  J_0(\mv{w}) &= \int_0^\infty \mathbb{G}_\theta(e^{-sw_1}, \ldots, e^{-sw_p}) \, K_0(s, \mv{w}) \, h(s) \, \diff s, \\
  J_j(\mv{w}) &= \int_0^\infty \mathbb{G}_\theta(1, \ldots, 1, e^{-sw_j}, 1, \ldots, 1) \, K_j(s, \mv{w}) \, h(s) \, \diff s, \qquad j \in \{1, \ldots, p\}.
\end{align*}
Substituting the definitions of the process $\mathbb{G}_\theta$ and the functions $K_0, \ldots, K_p$, we obtain
\[
  B(\mv{w}) = \int_0^\infty \mathbb{C}(e^{-w_{1}s} , \dots , e^{-w_{p}s}) \, h(s) \, \diff s, \qquad \mv{w} \in \simplexp.
\]
An application of the functional delta method \citep[Theorem~3.9.4]{VW96} now yields the result.

\subsection{Proof of Lemma \ref{lemma:dens}}

The proof is constructive and consists of the following steps:
\begin{compactenum}[1.]
\item Construction of the spectral measure $H_m$.
\begin{compactenum}[(a)]
\item Discretisation of $H$ yielding a measure $G_m$ on $\Delta_{p-1}$, which is not necessarily a spectral measure.
\item Modification of $G_m$ into a genuine spectral measure $H_m$ .
\end{compactenum}
\item Proof of the inequality~\eqref{eq:approx}.
\end{compactenum} \bigskip

\noindent\textit{1. Construction of the spectral measure $H_m$.}
For $\mv{v} \in \Vc_{p,m}$, consider the set $\Delta_{p-1,\mv{v},m}$ of points $\mv{t} \in \simplexp$ such that $v_j \le t_j < v_j + 1/m$ for every $j \in \{1, \ldots, p-1\}$; recall that $t_p = 1 - t_1 - \cdots - t_{p-1}$, so that necessarily $v_p - (p-1)/m < t_p \le v_p$. The collection of sets $\{ \Delta_{p-1,\mv{v},m} : \mv{v} \in \Vc_{p,m} \}$ constitutes a partition of $\Delta_{p-1}$. Indeed, for every point $\mv{t} \in \simplexp$ there is a unique point $\mv{v} \in \Vc_{p,m}$ such that $\mv{t} \in \Delta_{p-1,\mv{v},m}$: Let $v_j$ be the integer part of $m t_{j}$ for $j \in \{1, \ldots, p-1\}$ and put $v_p = 1 - v_1 - \cdots - v_{p-1}$.

\textit{(a) Discretisation of $H$, yielding $G_m$.} Define a discrete measure $G_m$ on $\simplexp $ with support contained in $\Vc_{p,m}$ by $G_m(\{ \mv{v} \}) = H(\Delta_{p-1,\mv{v},m})$ for $\mv{v} \in \Vc_{p,m}$. In words, the mass assigned by the spectral measure $H$ on the set $\Delta_{p-1,\mv{v},m}$ is relocated to the corner point $\mv{v}$.

Since the sets $\Delta_{p-1,\mv{v},m}$ constitute a partition of $\simplexp$, the total mass of $G_m$ is still $G_m(\simplexp) = H(\simplexp ) = p$. However, $G_m$ does not need to verify the moment constraints. For $j \in \{1, \ldots, p\}$ we have
\[
  \int_{\Delta_{p-1}} t_j \, dG_m( \mv{t}) 
  = \sum_{\mv{v} \in \Vc_{p,m}} v_j \, G_m(\{ \mv{v} \}) 
  = \sum_{\mv{v} \in \Vc_{p,m}} v_j \, H(\Delta_{p-1,\mv{v},m}),
\]
which in general is not equal to unity.

Still, the moment constraints are not far from being verified. For $\mv{t} \in \Delta_{p-1,\mv{v},m}$ and $j \in \{1, \ldots, p-1\}$ we have
$v_j \le t_j < v_j + 1/m$. Integrating these inequalities over $\mv{t} \in \Delta_{p-1,\mv{v},m}$ with respect to $H$ and summing them over $\mv{v} \in \Vc_{p,m}$ yields
\[
  \int_{\Delta_{p-1}} t_j \, dG_m( \mv{t} ) \le 1 < \int_{\simplexp } t_j \, dG_m( \mv{t} ) + \frac{1}{m}, \qquad j \in \{1, \ldots, p-1\}.
\]
As a consequence, there exist numbers $c_j \in [0, 1)$ such that
\[
  \int_{\simplexp } t_j \, dG_m( \mv{t} ) = 1 - \frac{c_j}{m}, \qquad j \in \{1, \ldots, p-1\}.
\]

\textit{(b) Modification of $G_m$ into a spectral measure $H_m$.} We will modify $G_m$ into a genuine spectral measure $H_m$ by (slightly) increasing the masses at the vertices $e_1, \ldots, e_{p-1}$, where $e_j$ is the $j$th coordinate vector in $\RR^p$. Specifically, we set
\[
  H_m = (1 - a_0) \, G_m + a_1 \, \delta_{e_1} + \cdots + a_{p-1} \, \delta_{e_{p-1}}
\]
for some nonnegative numbers $a_0, \ldots, a_{p-1}$ to be determined by the moment constraints. For $j \in \{1, \ldots, p-1\}$, we must have
\[
  1 = \int_{\Delta_{p-1}} t_j \, dH_m( \mv{t} ) = (1 - a_0) (1 - c_j/m) + a_j.
\]
In addition, the total mass must be equal to
\[
  p = H_m(\Delta_{p-1}) = (1 - a_0) \, p + a_1 + \cdots + a_{p-1}.
\]
Substituting $a_j = 1 - (1 - a_0) (1 - c_j/m)$ into this equation and solving for $a_0$ yields, after some algebra,
\begin{align*}
  a_0 &= \frac{\sum_{i=1}^{p-1} c_i}{m + \sum_{i=1}^{p-1} c_i}, \\
  a_j &= \frac{c_j + \sum_{i=1}^{p-1} c_i}{m + \sum_{i=1}^{p-1} c_i}, \qquad j \in \{1, \ldots, p-1\}.
\end{align*}
This concludes the construction of the spectral measure $H_m$. Note that $0 \le a_j < p/m$ for every $j \in \{0, \ldots, p-1\}$. \bigskip

\noindent\textit{2. Proof of the inequality~\eqref{eq:approx}.}
For $\mv{w}, \mv{t} \in \simplexp$, write 
\[
  f(\mv{w}, \mv{t} ) = \max \{ w_1 t_1, \ldots, w_p t_p \}.
\]
The Pickands dependence function $A_m$ of the spectral measure $H_m$ constructed above is given by
\begin{align*}
  A_m( \mv{w} )  
  &= \int_{\simplexp } f(\mv{w}, \mv{t}) \, H_m(\diff\mv{t}) \\
  &= (1 - a_0) \, \int_{\simplexp} f(\mv{w}, \mv{t} ) \, G_m(\diff\mv{t} ) + a_1 w_1 + \cdots + a_{p-1} w_{p-1}.
\end{align*}

Put $B_m(\mv{w}) = \int f(\mv{w}, \mv{t}) \, G_m(\diff\mv{t})$, the ``Pickands transform'' of $G_m$. Clearly $B_m \ge 0$ and $B_m$ is convex, being a weighted average (over $\mv{t}$) of the convex functions $\mv{w} \mapsto f(\mv{w},\mv{t} )$. As a consequence, $B_m(\mv{w}) \le \max \{ B_m(\mv{e_1}), \ldots, B_m(\mv{e_p}) \}$. Now $B_m(\mv{e_j}) = \int t_j \, G_m(\diff\mv{t})$, which is equal to $1 - c_j/m$ if $j \in \{1, \ldots, p-1\}$ and which is equal to $p - \sum_{i=1}^{p-1} (1 - c_i/m) = 1 + \sum_{i=1}^{p-1} c_i / m = 1 / (1 - a_0)$ if $j = p$. It follows that $B_m(\mv{w}) \le 1 / (1 - a_0)$ for all $\mv{w} \in \simplexp$.

We obtain, on the one hand,
\begin{align*}
  A_m(\mv{w}) 
  &\le B_m(\mv{w}) + a_1 w_1 + \cdots + a_{p-1} w_{p-1} \\
  &\le B_m(\mv{w}) + \max(a_1, \ldots, a_{p-1}) < B_m(\mv{w}) + \frac{p}{m}
\end{align*}
and, on the other hand,
\begin{align*}
  A_m(\mv{w}) 
  \ge (1 - a_0) B_m( \mv{ w } ) 
  \ge B_m(\mv{w}) - \frac{a_0}{1 - a_0}
  &= B_m(\mv{w}) - \frac{1}{m} \sum_{i=1}^{p-1} c_i \\
  &> B_m(\mv{w}) - \frac{p}{m}.
\end{align*}
Therefore,
\[
  |A(\mv{w}) - A_{m}^{\mathrm{P}}(\mv{w})| \le |A(\mv{w}) - B_m(\mv{w})| + |B_m(\mv{w}) - A_m(\mv{w})| < |A(\mv{w}) - B_m(\mv{w})| + \frac{p}{m}.
\]
Furthermore,
\begin{align*}
  |A(\mv{w}) - B_m(\mv{w})|
  &\le \sum_{\mv{v} \in \Vc_{p,m}} \left| \int_{\Delta_{p,\mv{v},m}} f(\mv{w},\mv{t}) \, H(\diff\mv{t}) - \int_{\Delta_{p,\mv{v},m}} f(\mv{w},\mv{t}) \, G_m(\diff\mv{t}) \right| \\
  &= \sum_{\mv{v} \in \Vc_{p,m}} \left| \int_{\Delta_{p,\mv{v},m}} f(\mv{w},\mv{t}) \, H(\diff\mv{t}) - f(\mv{w},\mv{v}) \, H(\Delta_{p,\mv{v},m}) \right| \\
  &= \sum_{\mv{v} \in \Vc_{p,m}} \left| \int_{\Delta_{p,\mv{v},m}} \bigl( f(\mv{w}, \mv{t}) - f(\mv{w}, \mv{v}) \bigr) \, H(\diff\mv{t})  \right| \\
  &\le \sum_{\mv{v} \in \Vc_{p,m}} \int_{\Delta_{p,\mv{v},m}} \bigl| f(\mv{w}, \mv{t}) - f(\mv{w}, \mv{v}) \bigr| \, H(\diff\mv{t}).
\end{align*}
By checking all possible cases one verifies that $|\max(a_1,a_2) - \max(b_1,b_2)| \le \max(|a_1-b_1|, |a_2-b_2|)$ for all real $a_1, a_2, b_1, b_2$. An induction argument yields $|\max(a_1, \ldots, a_k) - \max(b_1, \ldots, b_k)| \le \max(|a_1 - b_1|, \ldots, |a_k - b_k|)$. It follows that $|f(\mv{w}, \mv{t}) - f(\mv{w}, \mv{v})| \le \max(| t_1 - v_1 |, \ldots, | t_p - v_p|)$. As a consequence,
\[
  |A(\mv{w}) - B_m( \mv{w} ) |
  < \sum_{\mv{v} \in \Vc_{p,m}} \int_{\Delta_{p,\mv{v},m}} \frac{p-1}{m} \, H(\diff\mv{t})
  = \frac{(p-1)p}{m}. 
\]
Inequality~\eqref{eq:approx} follows.

\end{document}